\begin{document}
\newtheorem{theorem}{Theorem}
\newtheorem{acknowledgement}[theorem]{Acknowledgement}
\newtheorem{axiom}[theorem]{Axiom}
\newtheorem{case}[theorem]{Case}
\newtheorem{claim}[theorem]{Claim}
\newtheorem{conclusion}[theorem]{Conclusion}
\newtheorem{condition}[theorem]{Condition}
\newtheorem{conjecture}[theorem]{Conjecture}
\newtheorem{criterion}[theorem]{Criterion}
\newtheorem{definition}{Definition}
\newtheorem{exercise}[theorem]{Exercise}
\newtheorem{lemma}{Lemma}
\newtheorem{corollary}{Corollary}
\newtheorem{notation}[theorem]{Notation}
\newtheorem{problem}[theorem]{Problem}
\newtheorem{proposition}{Proposition}
\newtheorem{solution}[theorem]{Solution}
\newtheorem{summary}[theorem]{Summary}
\newtheorem{assumption}{Assumption}
\newtheorem{example}{\bf Example}
\newtheorem{remark}{\bf Remark}

\def\qed{$\Box$}
\def\QED{\mbox{\phantom{m}}\nolinebreak\hfill$\,\Box$}
\def\proof{\noindent{\emph{Proof:} }}
\def\poof{\noindent{\emph{Sketch of Proof:} }}
\def
\endproof{\hspace*{\fill}~\qed
\par
\endtrivlist\unskip}
\def\endproof{\hspace*{\fill}~\qed\par\endtrivlist\vskip3pt}

\def\E{\mathsf{E}}
\def\eps{\varepsilon}
\def\phi{\varphi}
\def\Lsp{{\boldsymbol L}}
\def\Bsp{{\boldsymbol B}}
\def\lsp{{\boldsymbol\ell}}
\def\Ltsp{{\Lsp^2}}
\def\Lpsp{{\Lsp^p}}
\def\Linsp{{\Lsp^{\infty}}}
\def\LtR{{\Lsp^2(\Rst)}}
\def\ltZ{{\lsp^2(\Zst)}}
\def\ltsp{{\lsp^2}}
\def\ltZt{{\lsp^2(\Zst^{2})}}
\def\ninN{{n{\in}\Nst}}
\def\oh{{\frac{1}{2}}}
\def\grass{{\cal G}}
\def\ord{{\cal O}}
\def\dist{{d_G}}
\def\conj#1{{\overline#1}}
\def\ntoinf{{n \rightarrow \infty }}
\def\toinf{{\rightarrow \infty }}
\def\tozero{{\rightarrow 0 }}
\def\trace{{\operatorname{trace}}}
\def\ord{{\cal O}}
\def\UU{{\cal U}}
\def\rank{{\operatorname{rank}}}
\def\acos{{\operatorname{acos}}}

\def\SINR{\mathsf{SINR}}
\def\SNR{\mathsf{SNR}}
\def\SIR{\mathsf{SIR}}
\def\tSIR{\widetilde{\mathsf{SIR}}}
\def\Ei{\mathsf{Ei}}
\def\l{\left}
\def\r{\right}
\def\({\left(}
\def\){\right)}
\def\lb{\left\{}
\def\rb{\right\}}

\setcounter{page}{1}

\newcommand{\eref}[1]{(\ref{#1})}
\newcommand{\fig}[1]{Fig.\ \ref{#1}}

\def\bydef{:=}
\def\ba{{\mathbf{a}}}
\def\bb{{\mathbf{b}}}
\def\bc{{\mathbf{c}}}
\def\bd{{\mathbf{d}}}
\def\bee{{\mathbf{e}}}
\def\bff{{\mathbf{f}}}
\def\bg{{\mathbf{g}}}
\def\bh{{\mathbf{h}}}
\def\bi{{\mathbf{i}}}
\def\bj{{\mathbf{j}}}
\def\bk{{\mathbf{k}}}
\def\bl{{\mathbf{l}}}
\def\bn{{\mathbf{n}}}
\def\bo{{\mathbf{o}}}
\def\bp{{\mathbf{p}}}
\def\bq{{\mathbf{q}}}
\def\br{{\mathbf{r}}}
\def\bs{{\mathbf{s}}}
\def\bt{{\mathbf{t}}}
\def\bu{{\mathbf{u}}}
\def\bv{{\mathbf{v}}}
\def\bw{{\mathbf{w}}}
\def\bx{{\mathbf{x}}}
\def\by{{\mathbf{y}}}
\def\bz{{\mathbf{z}}}
\def\b0{{\mathbf{0}}}

\def\bA{{\mathbf{A}}}
\def\bB{{\mathbf{B}}}
\def\bC{{\mathbf{C}}}
\def\bD{{\mathbf{D}}}
\def\bE{{\mathbf{E}}}
\def\bF{{\mathbf{F}}}
\def\bG{{\mathbf{G}}}
\def\bH{{\mathbf{H}}}
\def\bI{{\mathbf{I}}}
\def\bJ{{\mathbf{J}}}
\def\bK{{\mathbf{K}}}
\def\bL{{\mathbf{L}}}
\def\bM{{\mathbf{M}}}
\def\bN{{\mathbf{N}}}
\def\bO{{\mathbf{O}}}
\def\bP{{\mathbf{P}}}
\def\bQ{{\mathbf{Q}}}
\def\bR{{\mathbf{R}}}
\def\bS{{\mathbf{S}}}
\def\bT{{\mathbf{T}}}
\def\bU{{\mathbf{U}}}
\def\bV{{\mathbf{V}}}
\def\bW{{\mathbf{W}}}
\def\bX{{\mathbf{X}}}
\def\bY{{\mathbf{Y}}}
\def\bZ{{\mathbf{Z}}}

\def\mA{{\mathbb{A}}}
\def\mB{{\mathbb{B}}}
\def\mC{{\mathbb{C}}}
\def\mD{{\mathbb{D}}}
\def\mE{{\mathbb{E}}}
\def\mF{{\mathbb{F}}}
\def\mG{{\mathbb{G}}}
\def\mH{{\mathbb{H}}}
\def\mI{{\mathbb{I}}}
\def\mJ{{\mathbb{J}}}
\def\mK{{\mathbb{K}}}
\def\mL{{\mathbb{L}}}
\def\mM{{\mathbb{M}}}
\def\mN{{\mathbb{N}}}
\def\mO{{\mathbb{O}}}
\def\mP{{\mathbb{P}}}
\def\mQ{{\mathbb{Q}}}
\def\mR{{\mathbb{R}}}
\def\mS{{\mathbb{S}}}
\def\mT{{\mathbb{T}}}
\def\mU{{\mathbb{U}}}
\def\mV{{\mathbb{V}}}
\def\mW{{\mathbb{W}}}
\def\mX{{\mathbb{X}}}
\def\mY{{\mathbb{Y}}}
\def\mZ{{\mathbb{Z}}}

\def\cA{\mathcal{A}}
\def\cB{\mathcal{B}}
\def\cC{\mathcal{C}}
\def\cD{\mathcal{D}}
\def\cE{\mathcal{E}}
\def\cF{\mathcal{F}}
\def\cG{\mathcal{G}}
\def\cH{\mathcal{H}}
\def\cI{\mathcal{I}}
\def\cJ{\mathcal{J}}
\def\cK{\mathcal{K}}
\def\cL{\mathcal{L}}
\def\cM{\mathcal{M}}
\def\cN{\mathcal{N}}
\def\cO{\mathcal{O}}
\def\cP{\mathcal{P}}
\def\cQ{\mathcal{Q}}
\def\cR{\mathcal{R}}
\def\cS{\mathcal{S}}
\def\cT{\mathcal{T}}
\def\cU{\mathcal{U}}
\def\cV{\mathcal{V}}
\def\cW{\mathcal{W}}
\def\cX{\mathcal{X}}
\def\cY{\mathcal{Y}}
\def\cZ{\mathcal{Z}}
\def\cd{\mathcal{d}}
\def\Mt{M_{t}}
\def\Mr{M_{r}}
\def\O{\Omega_{M_{t}}}
\newcommand{\figref}[1]{{Fig.}~\ref{#1}}
\newcommand{\tabref}[1]{{Table}~\ref{#1}}

\newcommand{\var}{\mathsf{var}}
\newcommand{\fb}{\tx{fb}}
\newcommand{\nf}{\tx{nf}}
\newcommand{\BC}{\tx{(bc)}}
\newcommand{\MAC}{\tx{(mac)}}
\newcommand{\Pout}{p_{\mathsf{out}}}
\newcommand{\nnn}{\nn\\}
\newcommand{\FB}{\tx{FB}}
\newcommand{\TX}{\tx{TX}}
\newcommand{\RX}{\tx{RX}}
\renewcommand{\mod}{\tx{mod}}
\newcommand{\m}[1]{\mathbf{#1}}
\newcommand{\td}[1]{\tilde{#1}}
\newcommand{\sbf}[1]{\scriptsize{\textbf{#1}}}
\newcommand{\stxt}[1]{\scriptsize{\textrm{#1}}}
\newcommand{\suml}[2]{\sum\limits_{#1}^{#2}}
\newcommand{\sumlk}{\sum\limits_{k=0}^{K-1}}
\newcommand{\eqhsp}{\hspace{10 pt}}
\newcommand{\tx}[1]{\texttt{#1}}
\newcommand{\Hz}{\ \tx{Hz}}
\newcommand{\sinc}{\tx{sinc}}
\newcommand{\tr}{\mathrm{tr}}
\newcommand{\diag}{\mathrm{diag}}
\newcommand{\MAI}{\tx{MAI}}
\newcommand{\ISI}{\tx{ISI}}
\newcommand{\IBI}{\tx{IBI}}
\newcommand{\CN}{\tx{CN}}
\newcommand{\CP}{\tx{CP}}
\newcommand{\ZP}{\tx{ZP}}
\newcommand{\ZF}{\tx{ZF}}
\newcommand{\SP}{\tx{SP}}
\newcommand{\MMSE}{\tx{MMSE}}
\newcommand{\MINF}{\tx{MINF}}
\newcommand{\RC}{\tx{MP}}
\newcommand{\MBER}{\tx{MBER}}
\newcommand{\MSNR}{\tx{MSNR}}
\newcommand{\MCAP}{\tx{MCAP}}
\newcommand{\vol}{\tx{vol}}
\newcommand{\ah}{\hat{g}}
\newcommand{\tg}{\tilde{g}}
\newcommand{\teta}{\tilde{\eta}}
\newcommand{\heta}{\hat{\eta}}
\newcommand{\uh}{\m{\hat{s}}}
\newcommand{\eh}{\m{\hat{\eta}}}
\newcommand{\hv}{\m{h}}
\newcommand{\hh}{\m{\hat{h}}}
\newcommand{\Po}{P_{\mathrm{out}}}
\newcommand{\Poh}{\hat{P}_{\mathrm{out}}}
\newcommand{\Ph}{\hat{\gamma}}
\newcommand{\mat}[1]{\begin{matrix}#1\end{matrix}}
\newcommand{\ud}{^{\dagger}}
\newcommand{\C}{\mathcal{C}}
\newcommand{\nn}{\nonumber}
\newcommand{\nInf}{U\rightarrow \infty}

\title{Integrated Sensing, Communication, and Computation for Over-the-Air Federated Edge Learning}

\author{Dingzhu Wen$^{\orcidlink{0000-0003-0538-5811}}$,~\IEEEmembership{Member,~IEEE,} Sijing Xie$^{\orcidlink{0009-0006-2398-4657}}$, Xiaowen Cao$^{\orcidlink{0000-0003-4164-071X}}$,~\IEEEmembership{Member,~IEEE,} Yuanhao Cui$^{\orcidlink{0000-0001-6323-8559}}$,~\IEEEmembership{Member,~IEEE,} Jie Xu$^{\orcidlink{0000-0002-4854-8839}}$,~\IEEEmembership{Fellow,~IEEE,} Yuanming Shi$^{\orcidlink{0000-0002-1418-7465}}$,~\IEEEmembership{Senior~Member,~IEEE,} and Shuguang Cui$^{\orcidlink{0000-0003-2608-775X}}$,~\IEEEmembership{Fellow,~IEEE}

\thanks{The work was supported in part by NSFC with Grant No. 62293482, the Basic Research Project No. HZQB-KCZYZ-2021067 of Hetao Shenzhen-HK S\&T Cooperation Zone, the Shenzhen Outstanding Talents Training Fund 202002, the Guangdong Research Projects No. 2017ZT07X152 and No. 2019CX01X104, the Guangdong Provincial Key Laboratory of Future Networks of Intelligence (Grant No. 2022B1212010001), the Shenzhen Key Laboratory of Big Data and Artificial Intelligence (Grant No. ZDSYS201707251409055), the National Natural Science Foundation of China under grants No. 62401369, 62271318, 62471424, 92267202, Shenzhen Science and Technology Program under Grant No. RCBS20231211090520032, and the Shanghai Sailing Program under Grant 23YF1427400. 
Part of the described research work is conducted in the Core Facility Platform of Computer Science and Communication provided by ShanghaiTech University.
An earlier version of this paper was presented at the 2024 IEEE/CIC International Conference on Communications in China (ICCC) [DOI: 10.1109/ICCC62479.2024.10681942].
 }

\thanks{D. Wen, S. Xie, and Y. Shi are with the School of Information Science and Technology, ShanghaiTech University, Shanghai 201210, China (e-mail: \{wendzh,~xiesj2023,~shiym\}@shanghaitech.edu.cn). 
   }

\thanks{X. Cao is with the College of Electronic and Information Engineering, Shenzhen University and Guangdong Provincial Key Laboratory of Future Networks of Intelligence, Shenzhen  518172, China (email: caoxwen@szu.edu.cn), X. Cao is the corresponding author.}

\thanks{Y. Cui is with the School of Information and Communication Engineering, Beijing University of Posts and Telecommunications, Beijing 100876, China (e-mail: cuiyuanhao@bupt.edu.cn).}

\thanks{J. Xu and S. Cui are with the School of Science and Engineering (SSE), the Shenzhen Future Network of Intelligence Institute (FNii-Shenzhen), and the Guangdong Provincial Key Laboratory of Future Networks of Intelligence, the Chinese University of Hong Kong at Shenzhen, Guangdong 518172, China (e-mail: \{xujie,~shuguangcui\}@cuhk.edu.cn).  }

}
  
\maketitle

\begin{abstract}
This paper studies an over-the-air federated edge learning (Air-FEEL) system with integrated sensing, communication, and computation (ISCC), in which one edge server coordinates multiple edge devices to wirelessly sense the objects and use the sensing data to collaboratively train a machine learning model for recognition tasks. In this system, over-the-air computation (AirComp) is employed to enable one-shot model aggregation from edge devices. Under this setup, we analyze the convergence behavior of the ISCC-enabled Air-FEEL in terms of the loss function degradation, by particularly taking into account the wireless sensing noise during the training data acquisition and the AirComp distortions during the over-the-air model aggregation. The result theoretically shows that sensing, communication, and computation compete for network resources to jointly decide the convergence rate. 
Based on the analysis, 
we design the ISCC parameters under the target of maximizing the loss function degradation while ensuring the latency and energy budgets in each round. The challenge lies on the tightly coupled processes of sensing, communication, and computation among different devices. 
To tackle the challenge, we derive a low-complexity ISCC algorithm by alternately optimizing the batch size control and the network resource allocation. 
It is found that for each device, less sensing power should be consumed if a larger batch of data samples is obtained and vice versa. Besides, with a given batch size, the optimal computation speed of one device is the minimum one that satisfies the latency constraint. Numerical results based on a human motion recognition task verify the theoretical convergence analysis and show that the proposed ISCC algorithm well coordinates the batch size control and resource allocation among sensing, communication, and computation to enhance the learning performance.
\end{abstract}
\begin{IEEEkeywords}
Over-the-air federated edge learning, 
sensing-communication-computation integration, convergence analysis, resource allocation.
\end{IEEEkeywords}
\vspace{-15pt}
\section{Introduction}

Future wireless communication systems are envisioned to deploy artificial intelligence (AI) at the network edge for providing pervasive intelligent services, giving rise to a new research area called edge AI \cite{9606720}. Edge AI tasks naturally involve three tightly coupled processes, i.e., sensing for data acquisition, communication for data sharing, and computation for signal processing and intelligent decision-making \cite{shi2023task,wen2024integrated,wen2024survey}. Integrated sensing and communication (ISAC) \cite{liu2022integrated} and over-the-air federated edge learning (Air-FEEL) \cite{8870236, amiri2020machine, yang2020federated, 9606731} are two emerging techniques that integrate two of the three processes for improving the operational efficiency of edge AI. The former shares hardware and network resources (e.g., spectrum) between wireless sensing and communication, leading to decreased hardware size and enhanced resource utilization efficiency for data acquisition and sharing. The latter conducts efficient model aggregation in FEEL via utilizing an integrated communication and computation technique called over-the-air computation (AirComp) \cite{8468002} to accelerate the training of edge AI models. To exploit both benefits and further enhance resource utilization efficiency, integrated sensing, communication, and computation (ISCC) unifies ISAC and Air-FEEL to break the barriers between the three processes. \textcolor{black}{\cite{10294279} proposed ISCC-based over-the-air (ISCCO) framework, focusing on omnidirectional and directional beampattern designs to minimize AirComp error.} To this end, ISCC-based Air-FEEL is the theme of this paper.

\subsection{Related Works}

ISAC and Air-FEEL have been investigated independently in the literature. In ISAC, one framework is radar-communication coexistence, where existing sensing and communication systems are allowed to coexist via spectrum sharing. The key lies in interference management via designing methods such as null-space radar precoders (see, e.g., \cite{7814210}), compressive-sensing-based receivers (see, e.g., \cite{zheng2018adaptive}), and joint transceivers \cite{chen2022generalized}. The other framework is dual-functional radar communication (DRFC), which aims to develop integrated systems capable of communication and sensing on one hardware. In DFRC, tremendous research has been conducted ranging from dual-functional waveform design \cite{9695365, 10086626}, signal processing \cite{huang2020majorcom}, and radio resource management \cite{wang2021power} to information-theoretic analysis \cite{10153971, 10217169} for unifying the design of sensing and communication. On the other hand, Air-FEEL emerges as a promising technique for dealing with the large communication overhead caused by local model aggregation in FEEL. 
It allows multiple devices to transmit their local models/gradients over the same resource blocks and utilizes the waveform superposition property for calculating the weighted sum of all transmitted symbols \cite{8468002, wen2019reduced, cao2020optimized, Cao_24WC}. The delay of local model aggregation in Air-FEEL is shown to be irrelevant to the number of participating devices, thus leading to a significant enhancement of communication efficiency \cite{8870236, amiri2020machine,yang2020federated,9606731}. 

Recently, researchers have made efforts to apply ISAC to empower edge AI. To be specific, edge inference tasks with sensing data as inputs are accelerated via the dual use of wireless signals for obtaining and transmitting real-time data in \cite{10521597}. Besides, the authors in \cite{10005142} proposed a novel edge training scheme that edge devices transmit abundant on-device sensing data generated via ISAC to a central server for training AI models. Moreover, some researchers have implemented ISAC in FEEL systems. The authors in \cite{9659826} proposed a FEEL approach called Wifederated to train machine learning models efficiently for WiFi sensing tasks while preserving data privacy. To boost the spectrum and hardware utilization efficiency, the ISAC signal was adopted for both sensing and communication in \cite{9792281}, and a cooperative sensing scheme based on vertical FEEL was proposed to enhance the sensing performance. To relieve communication overhead, a feature fusion method was designed in the personalized FEEL framework for ISAC-based gesture recognition systems with real Wi-Fi data \cite {10283763}. Furthermore, the authors in \cite{10506079} considered the joint optimization of multi-task learning resource allocation, including beamforming in ISAC, combined with the FEEL framework.

However, prior works on Air-FEEL and ISAC-supported FEEL fall short in integrating only two of the three processes for improving the efficiency of FEEL. 
As pointed out by \cite{shi2023task, wen2024integrated}, the processes of sensing for data acquisition, computation for local updating, and communication for model sharing are highly coupled in FEEL tasks. On one hand, they compete for network resources like time, bandwidth, and energy. On the other hand, the learning performance depends on the obtained and processed number of samples and the gradient distortion caused during the three processes. To this end, an ISCC-based FEEL framework was proposed in \cite{liu2022toward}. \textcolor{black}{Authors in \cite{10283720} theoretically analyzed the convergence performance of the proposed  FedAVG-ISCC framework. }However, the influence of sensing distortion on the learning performance is ignored in \cite{liu2022toward, 10283720}, which degrades its performance, especially in the scenario when the sensing signal-to-noise ratio (SNR) becomes low. \textcolor{black}{Besides, traditional orthogonal multiple access (OMA) techniques were adopted in \cite{liu2022toward} for transmission, leading to a linearly increasing communication latency as the number of devices increases.} To tackle the above two issues, a practical ISCC-based Air-FEEL scheme is proposed in this work by considering the influence of sensing noise and adopting AirComp technique for alleviating communication overhead.

\subsection{Contributions}

In this paper, an ISCC-based Air-FEEL framework is considered. In each training round, a global model is first shared by the server to all devices, which then acquire real-time data via wireless sensing. Next, all local models are updated on devices using the obtained local data samples. Finally, the server aggregates all updated local models using AirComp for updating the global model. To enhance the learning performance, we target reducing the convergence latency. However, there are two design challenges. One is the tight coupling of sensing, computation, and communication processes in three aspects: 1) the convergence rate depends on the gradient distortion of each round,  which is decided by the sensing and AirComp SNRs, and the number of sensing data samples (i.e., batch size); 2)  the batch size decides the on-device computation load; 3) the sensing, computation, and AirComp (communication) processes compete for network resources like energy and time. 
{\color{black}Next, although the influence of sensory data noise on the convergence performance has been studied in \cite{Mashhadi2014Collaborative}, the mathematical characterization regarding how wireless sensing quality impacts the data noise and the influence of the interplay among the sensing, communication, and computation qualities on the 
convergence performance is still missing, resulting in a lack of theoretical guidance for minimizing the convergence latency.} To tackle these challenges, an ISCC resource allocation scheme is proposed.
The detailed contributions are listed below. 

\begin{itemize}

\item {\bf ISCC-based Air-FEEL Framework}:  A practical ISCC-based Air-FEEL framework is established that elaborates on the processes of sensing for data acquisition, on-device computation for local updating, and AirComp for global aggregation in an online FEEL task. Particularly, the sensing echo signal processing to generate training data samples is introduced. The influence of sensing noise on each data sample is mathematically modeled. The coupling mechanism among sensing, computation, and communication is characterized.

\item {\bf Theoretical Convergence Analysis}: First, we mathematically analyze the influence of sensing and AirComp SNRs as well as the batch size of the obtained sensing data on local gradient distortion, by utilizing the first-order Taylor approximation of the training loss function. 
Accordingly, convergence analysis is conducted.  To the best of our knowledge, this is the first work to characterize the influence of data sample corruption caused by sensing noise on the convergence performance. It is proved that the convergence is accelerated with higher sensing and AirComp SNRs as well as a larger total batch size of all devices. Particularly, a larger sum batch size and a higher sensing SNR promote the per-round loss function degradation in a multiplicative manner. 
This provides a foundation for optimizing the learning performance via network resource allocation among sensing, communication, and computation.

\item {\bf ISCC Scheme of Joint Batch Size Control and Resource Allocation}: To achieve fast convergence with limited network resources, we develop a systematic ISCC design.
We aim at maximizing the loss function degradation in each round subject to the constraints on latency constraints and individual energy budgets at devices. The problem is non-convex and complicated due to the tight coupling of sensing, communication, and computation among different devices. A low-complexity ISCC scheme is proposed to address this problem by alternately solving two convex sub-problems for batch size control and resource allocation, respectively. Benefiting from the derived closed-form solutions, the computational complexity of each sub-problem is $\mathcal{O}(K^2)$ with $K$ being the number of devices. Moreover, it is found that acquiring a larger batch size of data samples on one device leads to smaller sensing power and vice versa due to limited latency and energy budgets. With given batch sizes, the optimal computation speed of one device is the minimum one that satisfies the latency constraint.

\item {\bf Performance Evaluation}: Extensive simulations based on a human motion recognition task \cite{liu2022toward} are conducted to evaluate the performance of the proposed scheme. It is observed that a faster convergence rate is achieved with higher sensing and AirComp SNRs and larger batch sizes, thus verifying the theoretical analysis. Then, the superiority of the proposed scheme is demonstrated over other baselines that optimize the system parameters from a partial view. It is shown that the proposed scheme achieves higher testing accuracy than the baseline schemes with the same delay and energy budgets. This is because the proposed scheme properly exploits the inherent relationship among sensing, communication, and computation via joint batch size control and network resource management. 
\end{itemize}

\vspace{-15pt}
\section{System Model}

\subsection{ISCC-based Air-FEEL System}

\begin{figure*}[h]
  \centering
  \includegraphics[width=0.8\textwidth]{./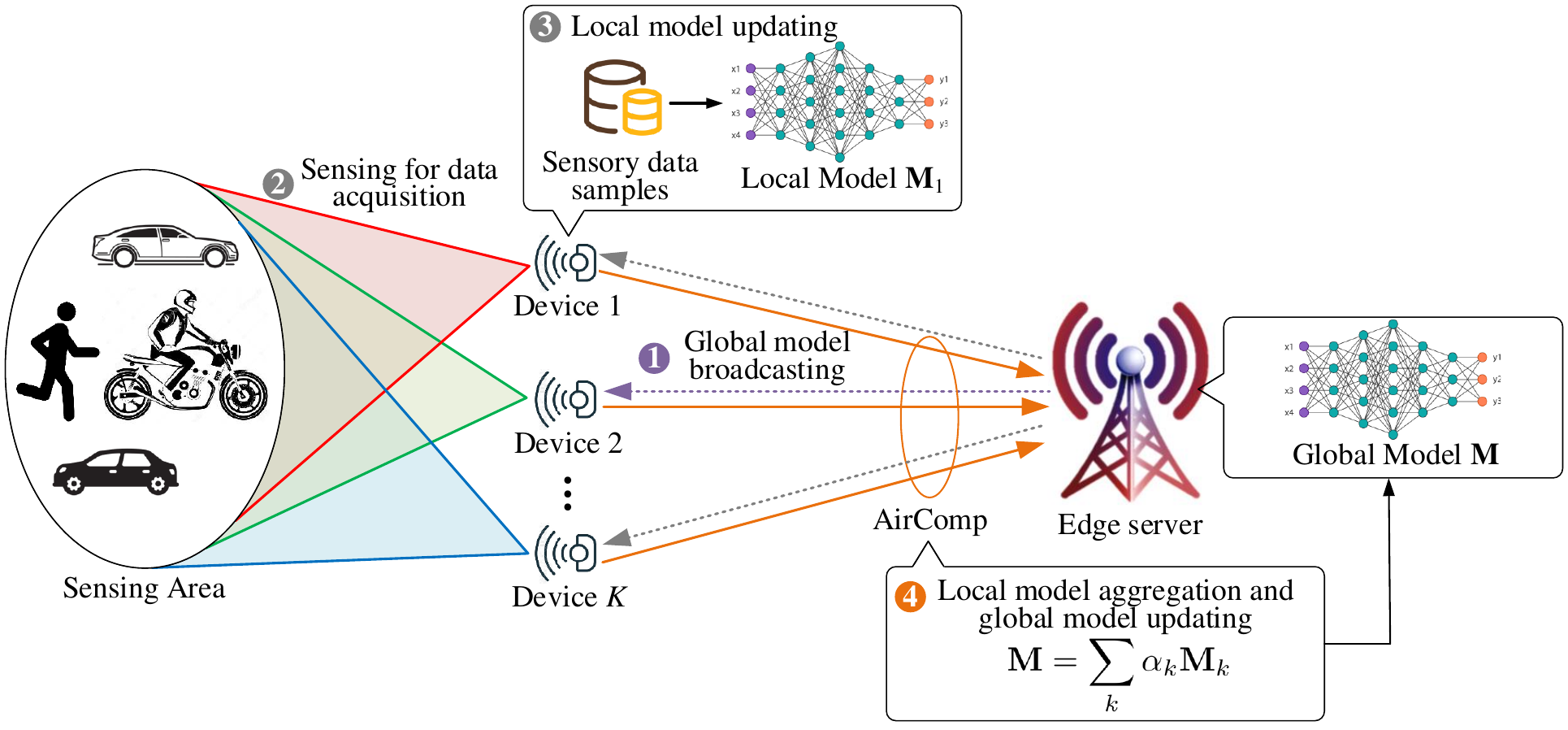}
  \caption{An ISCC-based Air-FEEL system.}
  \label{Fig:SystemModel}
\end{figure*}

As shown in Fig. \ref{Fig:SystemModel}, in the ISCC-based Air-FEEL system, there is one edge server equipped with a single-antenna access point and $K$ edge devices. All devices are each equipped with a single-antenna ISAC transceiver, and thus can shift between the wireless sensing mode and the communication mode in a time-division manner using a shared radio-frequency front-end circuit \cite{han2013joint}. In the sensing mode, one device transmits a dedicated waveform called frequency modulation continuous wave (FMCW) and receives its echo signal as the sensing data, which contains the desired information for training AI models. Different devices are assumed to sense the same target to get homogeneous data samples \cite{wen2023task,zhuang2023integrated}. Without loss of generality, it is assumed that there is no interference among the sensing signals of different devices by assigning orthogonal bandwidths for different devices or placing them in separated locations \cite{wen2024task,liu2022toward}.  In the communication mode, linear analog modulation is applied. At each subcarrier, all devices transmit one symbol at the same time. The technique of AirComp is adopted for aggregating the signals from all devices at the server \cite{8468002,wen2019reduced,cao2020optimized}. Specifically, a functional value (e.g., the weighted sum) is directly calculated at the server by utilizing the waveform superposition property instead of decoding each data stream.

In the Air-FEEL system shown in Fig. \ref{Fig:SystemModel}, the server coordinates edge devices to cooperatively train a global machine learning model. The training loss function is given by
\begin{equation}
    {F}({\bf w}) = \dfrac{1}{K} \sum\limits_{k=1}^K \mathbb{E}_{{\bf x}_k \sim \mathcal{P}_k}\left[ \mathcal{F}({\bf w}; {\bf x}_k) \right],
\end{equation}
where $ \mathcal{F}({\bf w}; {\bf x}_k)$ is the local loss function of device $k$, $\mathcal{P}_k$ is the corresponding local data distribution, ${\bf x}_k \sim \mathcal{P}_k$ represents a random seed with distribution $\mathcal{P}_k$ with each realization representing one data sample. There are four steps to complete an arbitrary training round $t$, as described below.
 
\begin{itemize}
\item \emph{Global Model Sharing}: The server broadcasts a global model ${\bf w}^{(t)}$ to all devices.

\item \emph{Training Samples Acquisition}: All devices conduct wireless sensing for obtaining a batch of sensing data samples $\{\hat{\bf x}_{k,i}^{(t)}\}$ used for local training in this round. The batch size is denoted as $b_k^{(t)}$, which can be adaptively adjusted over different rounds.

\item \emph{Local Model Updating}: Each device calculates the local gradient vector based on the global model and the obtained local training samples $\{\hat{\bf x}_{k,i}^{(t)}\}$. The local gradient vector of device $k$ is denoted as $\tilde{\bf g}_k^{(t)}$.

\item \emph{Aggregation and Global Updating}: The server aggregates all local gradient vectors using the technique of AirComp to obtain a global one, denoted as $\tilde{\bf g}^t$. The global model is then updated as
    ${\bf w}^{(t+1)} = {\bf w}^{(t)} - \alpha^{(t)} \tilde{\bf g}^{(t)}$, 
where $\alpha^{(t)}$ is the learning rate.

\end{itemize}
The above steps iterate until model convergence. 

\subsection{Sensing Model for Training Samples Acquisition}

In each round $t$, each device periodically senses\footnote{To address unlabeled sensing samples, \cite{9659826} proposed methods for continuous annotation, enabling clients to capture accurate labels over time. When sensor data is unavailable, Channel State Information (CSI) is utilized for user tracking, ensuring ongoing training of the federated model. \textcolor{black}{Further, to avoid the high cost and complexity associated with manual annotation, self-supervised learning techniques have emerged to handle unlabeled sensing data, which learn meaningful representations and capture the temporal structure inherent from unlabeled data \cite{10684811}}.} the target by transmitting an FMCW with multiple up-chirps and receiving the echo signal for obtaining the training data samples \cite{liu2022toward}. The echo signal of each period corresponds to one training data sample. The period is denoted as $\tau_s$. The sensing power of each device is assumed to be static in a training round. The echo signal is decomposed of three components, i.e., the desired line-of-sight echo signal containing the useful micro-Doppler information, the clutter caused by higher-order (indirect)
reflected paths, and the sensing noise \cite{wen2024task,liu2022toward,zhuang2023integrated}. The following steps are performed on the echo signal of each period to create a training data sample \cite{wen2024task,liu2022toward}. 
\begin{enumerate}
    \item \emph{Sampling and Reshaping}: The echo signal is sampled into a discrete sequence, which is reformed as a complex Radar matrix in a time-frequency representation \cite{chen2002time}.

    \item \emph{Clutter Cancellation}: A singular value decomposition (SVD) based filter is applied to the Radar matrix for clutter cancellation by excluding the singular values of designated dimensions.

    \item \emph{Short-time Fourier Transform (STFT)}: The matrix after cancellation is processed with STFT to create the  2-dimensional Doppler-time spectrogram.


    \item \emph{Vectorization and Normalization}: The spectrogram is further reshaped into a real vector and normalized by the sensing power.
\end{enumerate}
Since the SVD-based filtering and STFT are linear, the $i$-th data sample obtained by device $k$ in round $t$ is 
\begin{equation}\label{Eq:SensoryDataSample}
\hat{\bf x}_{k,i}^{(t)} = {\bf x}_{k,i}^{(t)} + {\bf c}_k^{(t)} + \dfrac{{\bf n}_s^{(t)}  }{\sqrt{P_{k,s}^{(t)}} }, \; 1 \leq i \leq b_k^{(t)},\; 1\leq k \leq K,
\end{equation}
where ${\bf x}_{k,i}^{(t)}$ is the ground-true data sample, ${\bf c}_k^{(t)}$ is the residual clutter after cancellation, ${\bf n}_s^{(t)}$ is the 
sensing noise with a zero-mean Gaussian distribution and  $\mathbb{E}({{\bf n}_s^{(t)}}^H{\bf n}_s^{(t)}) =\delta_s^2$, $P_{k,s}^{(t)}$ is the sensing power of device $k$, $b_k^{(t)}$ is the batch size of the obtained sensing dataset of device $k$. Without loss of generality, ${\bf c}_k^{(t)}$ follows a zero-mean multi-variate Gaussian distribution \cite{wen2024task,liu2022toward} with 
$\mathbb{E}({{\bf c}_k^{(t)}}^H{\bf c}_k^{(t)}) = \delta_{k,c}^2$. The time to obtain one data sample is denoted as $\tau_s$. The sensing time of device $k$ is given by 
\begin{equation}\label{Eq:SensingTime}
T_{k,s}^{(t)} = b_k^{(t)} \tau_s, \; 1\leq k \leq K.
\end{equation}
The sensing energy consumption of device $k$ is given by 
\begin{equation}\label{Eq:SensingEnergy}
E_{k,s}^{(t)} = P_{k,s}^{(t)} b_k^{(t)} \tau_s, \; 1\leq k \leq K.
\end{equation}
From \eqref{Eq:SensoryDataSample} to \eqref{Eq:SensingEnergy}, one can observe that more time and energy are consumed to obtain a larger batch of data samples. Also, acquiring higher-quality data samples with high sensing SNR consumes more energy.


\subsection{Local Computation Model}
In each training round $t$, the method of gradient descent is conducted on all devices for getting local gradient vectors. Consider an arbitrary device $k$, for which the stochastic gradient vector in terms of a data sample $\hat{\bf x}_{k,i}^{(t)}$ is
\begin{equation}\label{Eq:StochasticGradient}
    \hat{\bf g}_{k,i}^{(t)} = \nabla_{{\bf w}_k^{(t)}} \mathcal{F}({\bf w}_k^{(t)}; \hat{\bf x}_{k,i}^{(t)}), \; 1 \leq i \leq b_k^{(t)},\; 1\leq k \leq K,
\end{equation}
where $\hat{\bf g}_{k,i}^{(t)} \in \mathbb{R}^N$ with $N$ being the length of gradient vectors.
The local gradient vector of device $k$ is then given by
\begin{equation}\label{Eq:LocalGradientVecor}
\tilde{\bf g}_k^{(t)} = \dfrac{1}{b_k^{(t)}} \sum\limits_{i=1}^{b_k^{(t)} } \hat{\bf g}_{k,i}^{(t)},
\end{equation}
That says,
\begin{equation}\label{Eq:LocalGradientElement}
    \tilde{g}_{k,j}^{(t)} =  \dfrac{1}{b_k^{(t)}} \sum\limits_{i=1}^{b_k^{(t)} }\hat{g}_{k,i,j}^{(t)}, \; 1\leq j \leq N,
\end{equation}
where $\tilde{g}_{k,j}^{(t)} $ and $\hat{g}_{k,i,j}^{(t)}$ are $j$-th dimension of $\tilde{\bf g}_k^{(t)}$ and $\hat{\bf g}_{k,i}^{(t)}$, respectively. The computation time for calculating the local gradient vector is
\begin{equation}\label{Eq:ComputationTime}
T_{k, c}^{(t)} = \dfrac{b_k^{(t)} C}{f_k}, \; 1\leq k \leq K,
\end{equation}
where  $C$ is the number of processor operation cycles to update the local model using one data sample and $f_k$ is the computation frequency in cycle/s. The local computation energy consumption by device $k$ is given by
\begin{equation}
E_{k,c}^{(t)} = \Omega_k T_{k, c}^{(t)} f_k^3,\; 1\leq k \leq K,
\end{equation}
where  $\Omega_k$ is a constant characterizing the local computation performance of the processor on device $k$ \cite{zeng2021energy}. By substituting \eqref{Eq:ComputationTime}, the local computation energy consumption is
\begin{equation}\label{Eq:ComputationEnergy}
E_{k,c}^{(t)} = \Omega_k b_k^{(t)} C f_k^2,\; 1\leq k \leq K.
\end{equation}

\subsection{Aggregation via AirComp}
Broadband AirComp is adopted for aggregating all local gradient vectors. 
$M$ orthogonal subcarriers are used for transmission. Over each subcarrier, all local gradient elements of the same dimension are simultaneously transmitted and aggregated at the server. Specifically, for the $j$-th dimension of the gradient vector, it is aggregated as 
\begin{equation}
 y_j = \sum\limits_{k=1}^K h_k^{(t)} \sqrt{\beta_{k,u}^{(t)}} \dfrac{b_k^{(t)}}{b^{(t)}} \tilde{g}_{k,j}^{(t)} + n_u^{(t)}, \; 1\leq j \leq N,
\end{equation}
where $b^{(t)} = \sum\nolimits_{k=1}^K b_k^{(t)}$ is the total data samples used for training in this round, $\beta_{k,u}^{(t)}$ denotes the transmit power for uploading one gradient element, 
$h_k^{(t)}$ is the channel magnitude, $\tilde{g}_{k,j}^{(t)}$ is the $j$-th local gradient element of device $k$, and $n_u^{(t)}$ is the uplink channel noise. To facilitate AirComp, a common approach of channel alignment is performed (see, e.g., \cite{8468002,wen2019reduced,cao2020optimized}), i.e., 
\begin{equation}\label{Eq:ChannelAlignment}
\sqrt{\beta_{k,u}^{(t)}} h_k^{(t)} = \sqrt{\eta^{(t)}}, \; 1\leq k\leq K,
\end{equation}
where $\sqrt{\eta^{(t)}}$ is a positive constant representing the receive signal magnitude. The gradient is thus recovered as
\begin{equation}\label{equ:recov_gradient}
\tilde{g}_j^{(t)} = \dfrac{1}{\sqrt{ \eta^{(t)} } } y_j = \sum\limits_{k=1}^K \dfrac{b_k^{(t)}}{b^{(t)}} \tilde{g}_{k,j}^{(t)} + \dfrac{n_u^{(t)}}{ \sqrt{\eta^{(t)}} }.
\end{equation}
\textcolor{black}{Note that devices with weak channel gains, i.e., $h_k^{(t)}$ is small, should not participate in federated learning to avoid high receive distortion caused by channel fading and noise, which can be observed from the second term in \eqref{equ:recov_gradient}. Furthermore, this results in a lower AirComp SNR $\dfrac{\eta^{(t)}}{\delta_u^2}$, causing a gradual decrease in the loss function and a reduced convergence rate, which will be shown in \eqref{Eq:Per-RoundLossReduction} and \eqref{theorem:con}, respectively.} In this work, we assume that device scheduling has been adopted to exclude all the devices with weak channels in advance \footnote{\textcolor{black}{For the device scheduling, it is necessary to consider the device's overall performance, including batch size, SNR, and computation ability, which will be investigated in future studies.}}. All involved devices have good channel conditions. The recovered gradient vector is 
\begin{equation}\label{Eq:GlobalGradientVector}
\tilde{\bf g}^{(t)} = \sum\limits_{k=1}^K \dfrac{b_k^{(t)}}{b^{(t)}} \tilde{\bf g}_{k}^{(t)} + \dfrac{{\bf n}_u^{(t)}}{ \sqrt{ \eta^{(t)} } },
\end{equation}
where ${\bf n}_u^{(t)}$ is the additive white Gaussian noise vector with  $\mathbb{E}({{\bf n}_u^{(t)}}^H{\bf n}_u^{(t)}) = \delta_u^2$.

The uploading time is then given as
\begin{equation}\label{Eq:UploadingTime}
T_{k,u}^{(t)} = T_{u}^{(t)} = \left\lceil  \dfrac{N}{M} \right\rceil \tau_u, \; 1\leq k \leq K,
\end{equation}
where $\lceil \cdot \rceil $ represents rounding up, $\tau_u$ is the time of transmitting one gradient element in one subcarrier. The channels over the $M$ subcarriers in the time duration $T_{k,u}^{(t)}$ are assumed to be static, since $\tau_u$ and the bandwidth of subcarrier are far smaller than the coherence time and bandwidth \cite{8870236}. The energy consumption of device $k$ in this step is 
\begin{equation}
E_{k,u}^{(t)} =  \mathbb{E}\left[ \left( \sqrt{ \beta_{k,u}^{(t)} } \dfrac{b_k^{(t)}}{b^{(t)}} \tilde{g}_{k,j}^{(t)} \right)^2 \right] N \tau_u, \; 1\leq k \leq K.
\end{equation}
Without loss of generality, each local gradient element $\tilde{g}_{k,j}^{(t)} $ is normalized to have a zero mean and unit variance \cite{8870236, 9606731}. 
Thereby, by substituting the channel alignment condition in \eqref{Eq:ChannelAlignment} and the unit symbol variance mentioned above, the uploading energy consumption of device $k$ is derived as

\begin{equation}\label{Eq:UploadingEnergy}
E_{k,u}^{(t)} =   \dfrac{ \eta^{(t)} {b_k^{(t)}}^2 N \tau_u  }{ H_k^{(t)} {b^{(t)}}^2 }, \; 1\leq k \leq K,
\end{equation}
where $H_k^{(t)} = \mathbb{E}[(h_k^{(t)})^2] $ is the channel power gain.
\vspace{-15pt}
\section{Convergence Analysis of ISCC-based Air-FEEL}\label{Sect:Convergence}

In this section, we provide a convergence analysis for ISCC-enabled Air-FEEL in the presence of both sensing and AirComp errors.

\subsection{Basic Assumptions and Gradient Approximation}

\subsubsection{Basic Assumptions}
First, define the second-order mixed derivatives matrix of the loss function $\mathcal{F}({\bf w}; {\bf x})$ as
\begin{equation}
{\bf H}({\bf w}; {\bf x}) = \nabla_{\bf w} \left( \nabla_{\bf x} \mathcal{F}({\bf w}; {\bf x}) \right), \; \forall {\bf w},{\bf x}.
\end{equation}
which is a part of $\mathcal{F}({\bf w}; {\bf x})$'s Hessian matrix in terms of ${\bf w}$  and ${\bf x}$.
Following \cite{castiglia2022compressed}, the following assumption of bounded norm is made.
\begin{assumption}[Bounded Hessian]\label{Assump:BoundedHessian}
The Frobenius norm of ${\bf H}({\bf w}; {\bf x})$ is bounded:
\begin{equation}
\left\|{\bf H}({\bf w}; {\bf x})\right\|_{\mathcal{F}} \leq A,\; \forall {\bf w}, \forall {\bf x}.
\end{equation}
\end{assumption}

Then, denote the stochastic gradient vector generated by a clean data sample ${\bf x}_{k,i}^{(t)}$ in round $t$ as 
\begin{equation}
{\bf g}_{k,i}^{(t)} = \nabla_{{\bf w}_k^{(t)}}  \mathcal{F}({\bf w}_k^{(t)}; {\bf x}_{k,i}^{(t)}),
\end{equation}
where ${\bf w}_k^{(t)}$ is the local model of device $k$ in round $t$.
Without loss of generality, the following common assumptions are made \cite{allen2018natasha,ghadimi2013stochastic}. 
\begin{assumption}[Unbiased and Variance-Bounded Estimation]\label{Assump:UnbiasedEstimation}
\begin{equation}
\begin{aligned}
&\mathbb{E}({\bf g}_{k,i}^{(t)}) = {\bf g}^{(t)}, \\
&\mathbb{E}\left[\left\| {\bf g}_{k,i}^{(t)} - {\bf g}^{(t)} \right\|^2\right] \leq \sigma^2,
\end{aligned}
\end{equation} 
where ${\bf g}^{(t)}$ is the ground-true gradient vector in round $t$.
\end{assumption}

\begin{assumption}[Bounded Loss Function]
The loss function $\mathcal{F}({\bf w})$ is lower bounded,
	i.e., 
    \begin{equation}\label{eqn:bounded}
		\mathcal{F}({\bf w}) \geq \mathcal{F}^*, \quad \forall {\bf w}.
	\end{equation}
\end{assumption}

\begin{assumption}[Smoothness]\label{Assump:Smooth} 
	The loss function $F(\boldsymbol{z})$ is $ L $-smooth,
	i.e., for any $ {\bf v}, {\bf w} \in \operatorname{dom}(f)$,
	\begin{equation}\label{eqn:smooth}
		\mathcal{F}({\bf v}) \leq \mathcal{F}({\bf w})+({\bf v}- {\bf w})^{\top} \nabla \mathcal{F}({\bf w})+\frac{L}{2}\|{\bf v}-{\bf w}\|^{2}.
	\end{equation} 
\end{assumption}

\subsubsection{First-Order Taylor Approximation of  Gradient}
Recall the stochastic gradient vector $\hat{\bf g}_{k,i}^{(t)}$ defined in \eqref{Eq:StochasticGradient} is generated with a noisy data sample $ \hat{\bf x}_{k,i}^{(t)}$ defined in \eqref{Eq:SensoryDataSample}. 
Then, the Taylor expansion of the local loss function with noisy data at the reference point ${\bf x}_{k,i}^{(t)}$ is given by
\begin{equation}
\begin{aligned}
\mathcal{F}({\bf w}_k^{(t)}; \hat{\bf x}_{k,i}^{(t)}) &= \mathcal{F}({\bf w}_k^{(t)}; {\bf x}_{k,i}^{(t)}) \\
&+ \nabla_{ {\bf x}_{k,i}^{(t)} } \mathcal{F}({\bf w}_k^{(t)}; {\bf x}_{k,i}^{(t)})(\hat{\bf x}_{k,i}^{(t)} - {\bf x}_{k,i}^{(t)}) \\
&+ O\left(\hat{\bf x}_{k,i}^{(t)} - {\bf x}_{k,i}^{(t)}\right),
\end{aligned}
\end{equation}
where $O\left(\hat{\bf x}_{k,i}^{(t)} - {\bf x}_{k,i}^{(t)}\right)$ is the infinitesimal of higher order. By taking the partial derivative with respect to (w.r.t.) ${\bf w}_k^{(t)}$ for both sides of the above equation, it follows that 
\begin{equation}\label{Eq:GradientTaylor}
\hat{\bf g}_{k,i}^{(t)} = {\bf g}_{k,i}^{(t)} + \hat{\bf H}_{k,i}^{(t)} (\hat{\bf x}_{k,i}^{(t)} - {\bf x}_{k,i}^{(t)}) + O\left(\hat{\bf x}_{k,i}^{(t)} - {\bf x}_{k,i}^{(t)}\right),
\end{equation}
where 
\begin{equation}
    \hat{\bf H}_{k,i}^{(t)} = \nabla_{ {\bf w}_k^{(t)} } \left( \nabla_{ {\bf x}_{k,i}^{(t)} } \mathcal{F}({\bf w}_k^{(t)}; {\bf x}_{k,i}^{(t)}) \right).
\end{equation}
According to Assumption \ref{Assump:BoundedHessian}, 
\begin{equation}\label{Eq:BoundedH}
   \left\| \hat{\bf H}_{k,i}^{(t)} \right\|_{\mathcal{F}} = \left\| {\bf H}({\bf w}_k^{(t)}; {\bf x}_{k,i}^{(t)}) \right\|_{\mathcal{F}} \leq A.
\end{equation}
Next, by ignoring the infinitesimal of higher order terms in \eqref{Eq:GradientTaylor} \cite{castiglia2022compressed}, we have
\begin{equation}
\hat{\bf g}_{k,i}^{(t)} \approx {\bf g}_{k,i}^{(t)} + \hat{\bf H}_{k,i}^{(t)} (\hat{\bf x}_{k,i}^{(t)} - {\bf x}_{k,i}^{(t)}),
\end{equation}
which, by substituting the sensing data sample in \eqref{Eq:SensoryDataSample}, is derived as
\begin{equation}\label{Eq:TaylorApproximation}
\hat{\bf g}_{k,i}^{(t)} \approx {\bf g}_{k,i}^{(t)} +  \hat{\bf H}_{k,i}^{(t)}\left({\bf c}_k^{(t)} + \dfrac{  {\bf n}_s^{(t)}}{\sqrt{P_{k,s}^{(t)}} } \right).
\end{equation}

\subsection{Unbiased and Variance-Bounded Global Gradient Vector}
By substituting \eqref{Eq:LocalGradientVecor} and \eqref{Eq:TaylorApproximation} into the globally recovered gradient vector at the server in \eqref{Eq:GlobalGradientVector}, it is derived as
\begin{equation}\label{Eq:ApproximatedGlobalGradientVector}
\tilde{\bf g}^{(t)} = \dfrac{1}{b^{(t)}} \sum\limits_{k=1}^K \sum\limits_{i=1}^{b_k^{(t)} } \left[  {\bf g}_{k,i}^{(t)} +  \hat{\bf H}_{k,i}^{(t)}\left({\bf c}_k^{(t)} + \dfrac{  {\bf n}_s^{(t)}}{\sqrt{P_{k,s}^{(t)}} } \right)  \right]  + \dfrac{{\bf n}_u^{(t)}}{ \sqrt{ \eta^{(t)} } }.
\end{equation}
Then we have the following lemma on the unbiased and variance-bounded gradient estimation.
\begin{lemma}[Unbiased and Variance-Bounded Global Gradient Vector]\label{Lma:GlobalGradientVector}
\begin{equation}\label{Eq:LmaGlobal}
\begin{aligned}
&\text{Unbiased Estimation}:\; \mathbb{E}\left( \tilde{\bf g}^{(t)}\right) = {\bf g}^{(t)},\\
&\text{Bounded Variance}: \; \\
&\mathbb{E}\left[\!\left\| \tilde{\bf g}^{(t)} \!- \!{\bf g}^{(t)} \right\|_2^2\!\right] \!
\leq \! \sum\limits_{k=1}^K \!\dfrac{b_k^{(t)}}{{b^{(t)}}^2} \!\left[ \!\sigma^2+ A^2(\delta_{k,c}^2+ \!\dfrac{ \delta_s^2}{P_{k,s}^{(t)}})\!\right] \!+\! \dfrac{\delta_u^2}{\eta^{(t)}}.
\end{aligned}
\end{equation}
\end{lemma}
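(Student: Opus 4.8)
The plan is to argue directly from the closed form \eqref{Eq:ApproximatedGlobalGradientVector} of the recovered gradient $\tilde{\bf g}^{(t)}$. Write the error as a sum of four pieces, $\tilde{\bf g}^{(t)}-{\bf g}^{(t)} = {\bf e}_1+{\bf e}_2+{\bf e}_3+{\bf e}_4$, where ${\bf e}_1 = \frac{1}{b^{(t)}}\sum_{k=1}^K\sum_{i=1}^{b_k^{(t)}}({\bf g}_{k,i}^{(t)}-{\bf g}^{(t)})$ is the stochastic-sampling fluctuation, ${\bf e}_2 = \frac{1}{b^{(t)}}\sum_{k,i}\hat{\bf H}_{k,i}^{(t)}{\bf c}_k^{(t)}$ is the residual-clutter contribution, ${\bf e}_3 = \frac{1}{b^{(t)}}\sum_{k,i}\hat{\bf H}_{k,i}^{(t)}{\bf n}_s^{(t)}/\sqrt{P_{k,s}^{(t)}}$ is the sensing-noise contribution, and ${\bf e}_4 = {\bf n}_u^{(t)}/\sqrt{\eta^{(t)}}$ is the AirComp distortion. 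Throughout I treat the clutter and sensing-noise realizations as zero-mean and independent across the $b_k^{(t)}$ samples of a round and independent of the clean data, the local gradients ${\bf g}_{k,i}^{(t)}$ as an i.i.d.\ mini-batch satisfying Assumption \ref{Assump:UnbiasedEstimation}, and ${\bf n}_u^{(t)}$ as independent of everything else.

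For the unbiasedness claim, take the expectation of \eqref{Eq:ApproximatedGlobalGradientVector}: since $\hat{\bf H}_{k,i}^{(t)}$ is a deterministic function of $({\bf w}_k^{(t)},{\bf x}_{k,i}^{(t)})$ while ${\bf c}_k^{(t)}$ and ${\bf n}_s^{(t)}$ are zero-mean and independent of it, the Hessian terms drop out; $\mathbb{E}({\bf n}_u^{(t)})={\bf 0}$ kills the last term; and $\mathbb{E}({\bf g}_{k,i}^{(t)})={\bf g}^{(t)}$ leaves $\frac{1}{b^{(t)}}\sum_{k,i}{\bf g}^{(t)} = {\bf g}^{(t)}$ because $\sum_k b_k^{(t)}=b^{(t)}$.

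For the variance bound, expand $\mathbb{E}\|{\bf e}_1+{\bf e}_2+{\bf e}_3+{\bf e}_4\|_2^2$. Every cross term vanishes by conditioning: ${\bf e}_4$ is independent and zero-mean; ${\bf e}_2$ and ${\bf e}_3$ are zero-mean even after conditioning on all data (because ${\bf c}_k^{(t)}$, ${\bf n}_s^{(t)}$ are), are mutually independent, and are orthogonal to ${\bf e}_1$ by the same argument. So the squared norm is additive over the four pieces. For ${\bf e}_1$, the $b^{(t)}$ summands are pairwise uncorrelated, so $\mathbb{E}\|{\bf e}_1\|^2 = \frac{1}{(b^{(t)})^2}\sum_k b_k^{(t)}\,\mathbb{E}\|{\bf g}_{k,i}^{(t)}-{\bf g}^{(t)}\|^2 \le \sum_k b_k^{(t)}\sigma^2/(b^{(t)})^2$. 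For ${\bf e}_2$ and ${\bf e}_3$, condition on the data, bound $\|\hat{\bf H}_{k,i}^{(t)}{\bf v}\|\le\|\hat{\bf H}_{k,i}^{(t)}\|_{\mathcal{F}}\|{\bf v}\|\le A\|{\bf v}\|$ using \eqref{Eq:BoundedH}, and insert $\mathbb{E}\|{\bf c}_k^{(t)}\|^2=\delta_{k,c}^2$ and $\mathbb{E}\|{\bf n}_s^{(t)}\|^2=\delta_s^2$; summing the $b_k^{(t)}$ independent terms per device gives $\mathbb{E}\|{\bf e}_2\|^2\le\sum_k b_k^{(t)}A^2\delta_{k,c}^2/(b^{(t)})^2$ and $\mathbb{E}\|{\bf e}_3\|^2\le\sum_k b_k^{(t)}A^2\delta_s^2/((b^{(t)})^2 P_{k,s}^{(t)})$. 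Finally $\mathbb{E}\|{\bf e}_4\|^2=\delta_u^2/\eta^{(t)}$. Adding the four bounds yields exactly \eqref{Eq:LmaGlobal}.

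I expect the main obstacle to be bookkeeping the independence structure rather than any hard estimate: one must verify that the cross terms truly cancel (which hinges on the noise sources being independent across data indices $i$ and on conditioning on $({\bf w}_k^{(t)},{\bf x}_{k,i}^{(t)})$ before applying the Frobenius bound of Assumption \ref{Assump:BoundedHessian}), so that the clutter and sensing-noise variances scale linearly in $b_k^{(t)}$ rather than quadratically. A secondary caveat is that the higher-order Taylor remainder $O(\hat{\bf x}_{k,i}^{(t)}-{\bf x}_{k,i}^{(t)})$ has already been discarded in \eqref{Eq:TaylorApproximation}, so the lemma holds in the small-perturbation (high-SNR) regime, consistent with the approximation adopted following \cite{castiglia2022compressed}.
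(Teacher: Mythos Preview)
Your proposal is correct and follows essentially the same route as the paper: both argue unbiasedness from the zero means of ${\bf c}_k^{(t)}$, ${\bf n}_s^{(t)}$, ${\bf n}_u^{(t)}$ together with Assumption~\ref{Assump:UnbiasedEstimation}, and both bound the variance by observing that the cross terms in the expansion vanish so that the stochastic-sampling, sensing-perturbation, and AirComp-noise pieces can be estimated separately via Assumption~\ref{Assump:UnbiasedEstimation}, the Frobenius bound \eqref{Eq:BoundedH}, and $\mathbb{E}\|{\bf n}_u^{(t)}\|^2=\delta_u^2$. The only cosmetic difference is that the paper groups clutter and sensing noise into a single term (three pieces total), whereas you split them into ${\bf e}_2$ and ${\bf e}_3$ (four pieces); your explicit articulation of the per-sample independence of the clutter/noise realizations is exactly what the paper uses implicitly to get the linear-in-$b_k^{(t)}$ scaling.
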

\begin{proof}
In \eqref{Eq:ApproximatedGlobalGradientVector}, the sensing clutter ${\bf c}_k^{(t)}$, the sensing noise ${\bf n}_s^{(t)}$, and the channel noise ${\bf n}_u^{(t)}$ have zero means. According to Assumption \ref{Assump:UnbiasedEstimation}, it is shown that $\mathbb{E}\left( \tilde{\bf g}^{(t)}\right) = {\bf g}^{(t)}$. Based on this result, the variance of $ \tilde{\bf g}^{(t)}$ is given by
\begin{equation}\label{Eq:ServerSideVarianceDerivation}
\begin{aligned}
& \mathbb{E}\left[\left\| \tilde{\bf g}^{(t)} - {\bf g}^{(t)} \right\|_2^2\right] \\
& =  \mathbb{E}\left[\left\|   \dfrac{1}{b^{(t)}} \sum\limits_{k=1}^K \sum\limits_{i=1}^{b_k^{(t)} }{\bf g}_{k,i}^{(t)}   - {\bf g}^{(t)} \right\|_2^2\right] + \mathbb{E}\left[\left\| \dfrac{{\bf n}_u^{(t)}}{ \sqrt{ \eta^{(t)} } }  \right\|_2^2\right] \\
& ~~~+ \mathbb{E}\left[\left\| \dfrac{1}{b^{(t)}} \sum\limits_{k=1}^K \sum\limits_{i=1}^{b_k^{(t)} } \hat{\bf H}_{k,i}^{(t)}\left({\bf c}_k^{(t)} + \dfrac{  {\bf n}_s^{(t)}}{\sqrt{P_{k,s}^{(t)}} } \right) \right\|_2^2\right].\\
\end{aligned} 
\end{equation}
The above equation holds as the expectations of all cross terms are ${\bf 0}$. According to Assumption \ref{Assump:UnbiasedEstimation}, the first term in \eqref{Eq:ServerSideVarianceDerivation} is bounded by
\begin{equation}
 \mathbb{E}\left[\left\|   \dfrac{1}{b^{(t)}} \sum\limits_{k=1}^K \sum\limits_{i=1}^{b_k^{(t)} }{\bf g}_{k,i}^{(t)}   - {\bf g}^{(t)} \right\|_2^2\right] \leq \dfrac{\sigma^2}{b^{(t)}}.
\end{equation}
The second term is $ \mathbb{E}\left[\left\| \dfrac{{\bf n}_u^{(t)}}{ \sqrt{ \eta^{(t)} } }  \right\|_2^2\right] = \dfrac{\delta_u^2}{\eta^{(t)}}$. The third term, according to Assumption \ref{Assump:BoundedHessian}, is bounded by 
\begin{equation}
\begin{aligned}
&\mathbb{E}\left[\left\| \dfrac{1}{b^{(t)}} \sum\limits_{k=1}^K \sum\limits_{i=1}^{b_k^{(t)} } \hat{\bf H}_{k,i}^{(t)}\left({\bf c}_k^{(t)} + \dfrac{  {\bf n}_s^{(t)}}{\sqrt{P_{k,s}^{(t)}} } \right) \right\|_2^2\right]\\
& ~~ \leq  \sum\limits_{k=1}^K \dfrac{b_k^{(t)}}{{b^{(t)}}^2} \left[ A^2(\delta_{k,c}^2+ \dfrac{ \delta_s^2}{P_{k,s}^{(t)}})\right],
\end{aligned}
\end{equation}
since the expectation of the cross terms of the sensing clutter and noise is ${\bf 0}$. In summary, the bounded variance in \eqref{Eq:LmaGlobal} can be derived. The proof is complete.


\end{proof}

\vspace{-20pt}  

\subsection{Convergence Analysis}
In an arbitrary round $t$, the global model is updated as
\begin{equation}\label{Eq:GlobalUpdate}
    {\bf w}^{(t+1)} = {\bf w}^{(t)} - \alpha^{(t)} \tilde{\bf g}^{(t)},
\end{equation}
where $\alpha^{(t)}$ is the learning rate. Based on Assumptions \ref{Assump:BoundedHessian}--\ref{Assump:Smooth} and Lemma \ref{Lma:GlobalGradientVector}, a per-round loss function degradation can be derived as follows.
\begin{lemma} \label{Lem:Per-roundLoss}
Given learning rate $\alpha^{(t)}= \frac{1}{\sqrt{T}L}$, the loss function degradation after an arbitrary training round  $t$ is bounded by 
\begin{align}
\mathbb{E} \left[\mathcal{F}\left( {\bf w}^{(t)}  \right) - \mathcal{F}\left( {\bf w}^{(t+1)}\right) \right] \geq (\frac{1}{ \sqrt{T}L} - \frac{1}{2TL}) \|{\bf g}^{(t)}  \|^2  \notag\\
- \frac{1}{2TL} \bigg\{ \dfrac{\delta_u^2}{\eta^{(t)}} +  \sum\limits_{k=1}^K \dfrac{b_k^{(t)}}{{b^{(t)}}^2} \bigg[ \sigma^2+ A^2(\delta_{k,c}^2+ \dfrac{ \delta_s^2}{P_{k,s}^{(t)}})\bigg] \bigg\},\label{Eq:Per-RoundLossReduction}
\end{align}
where $T$ is the number of global rounds.
\end{lemma}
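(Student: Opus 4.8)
The plan is to combine the $L$-smoothness of the loss function (Assumption~\ref{Assump:Smooth}) with the global update rule~\eqref{Eq:GlobalUpdate} and the first two moments of the recovered gradient established in Lemma~\ref{Lma:GlobalGradientVector}. First I would instantiate the smoothness inequality at ${\bf v}={\bf w}^{(t+1)}$ and ${\bf w}={\bf w}^{(t)}$, using ${\bf w}^{(t+1)}-{\bf w}^{(t)}=-\alpha^{(t)}\tilde{\bf g}^{(t)}$, to obtain
\begin{equation}
\mathcal{F}({\bf w}^{(t+1)}) \leq \mathcal{F}({\bf w}^{(t)}) - \alpha^{(t)}\,\nabla\mathcal{F}({\bf w}^{(t)})^{\top}\tilde{\bf g}^{(t)} + \frac{L(\alpha^{(t)})^{2}}{2}\|\tilde{\bf g}^{(t)}\|^{2}. \nonumber
\end{equation}

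Next I would take the expectation over all randomness of round $t$ (data sampling, residual clutter ${\bf c}_k^{(t)}$, sensing noise ${\bf n}_s^{(t)}$, and AirComp noise ${\bf n}_u^{(t)}$), treating ${\bf w}^{(t)}$ as fixed so that ${\bf g}^{(t)}=\nabla\mathcal{F}({\bf w}^{(t)})$ is deterministic. By the unbiasedness part of Lemma~\ref{Lma:GlobalGradientVector}, $\mathbb{E}[\tilde{\bf g}^{(t)}]={\bf g}^{(t)}$, so the inner-product term contributes $-\alpha^{(t)}\|{\bf g}^{(t)}\|^{2}$; for the quadratic term I would invoke the bias--variance identity $\mathbb{E}[\|\tilde{\bf g}^{(t)}\|^{2}]=\|{\bf g}^{(t)}\|^{2}+\mathbb{E}[\|\tilde{\bf g}^{(t)}-{\bf g}^{(t)}\|^{2}]$ and upper-bound the second summand by the variance bound in~\eqref{Eq:LmaGlobal}. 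Denoting that bound by $V^{(t)}$ and rearranging yields
\begin{equation}
\mathbb{E}\!\left[\mathcal{F}({\bf w}^{(t)})-\mathcal{F}({\bf w}^{(t+1)})\right] \geq \Big(\alpha^{(t)}-\tfrac{L(\alpha^{(t)})^{2}}{2}\Big)\|{\bf g}^{(t)}\|^{2} - \tfrac{L(\alpha^{(t)})^{2}}{2}\,V^{(t)}. \nonumber
\end{equation}

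Finally I would plug in the prescribed step size $\alpha^{(t)}=\tfrac{1}{\sqrt{T}L}$, which gives $\alpha^{(t)}-\tfrac{L(\alpha^{(t)})^{2}}{2}=\tfrac{1}{\sqrt{T}L}-\tfrac{1}{2TL}$ and $\tfrac{L(\alpha^{(t)})^{2}}{2}=\tfrac{1}{2TL}$; substituting the explicit form of $V^{(t)}$ from~\eqref{Eq:LmaGlobal} then reproduces~\eqref{Eq:Per-RoundLossReduction} verbatim. I do not anticipate a genuine obstacle, since this is the textbook one-step descent argument for an unbiased stochastic gradient with bounded second moment. The only points requiring care are (i) the identification $\nabla\mathcal{F}({\bf w}^{(t)})={\bf g}^{(t)}$ together with the legitimacy of the bias--variance split, both of which rest on the unbiasedness in Lemma~\ref{Lma:GlobalGradientVector} and hence on the first-order Taylor approximation~\eqref{Eq:TaylorApproximation} where higher-order terms were discarded; and (ii) keeping the statement conditional on ${\bf w}^{(t)}$, so that no telescoping over rounds is needed here --- that aggregation is deferred to the convergence theorem.
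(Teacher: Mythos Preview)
Your proposal is correct and follows essentially the same route as the paper's proof: apply the $L$-smoothness inequality at the consecutive iterates, substitute the update rule~\eqref{Eq:GlobalUpdate}, take expectations using the unbiasedness from Lemma~\ref{Lma:GlobalGradientVector}, expand $\mathbb{E}[\|\tilde{\bf g}^{(t)}\|^2]$ via the bias--variance identity, bound the variance term by~\eqref{Eq:LmaGlobal}, and finally insert $\alpha^{(t)}=\tfrac{1}{\sqrt{T}L}$. The paper's proof in Appendix~\ref{_proof_of_lemma_eachround} carries out exactly these steps in the same order.
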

\begin{proof}
	Please refer to Appendix \ref{_proof_of_lemma_eachround}.
\end{proof}
It is observed in Lemma \ref{Lem:Per-roundLoss} that the loss function decreases more rapidly with higher AirComp SNR $\eta^{(t)}/\delta_u^2$, a larger sum batch size $b^{(t)}$, and higher sensing SNR $\{ P_{k,s}^{(t)}/\delta_s^2 \}$.  Particularly, a larger sum batch size and a higher sensing SNR promote the per-round loss function degradation in a multiplicative manner. 
However, these conditions cannot be simultaneously achieved due to the limited network resources like time, energy, and on-device processing capability, as well as hostile channel fading and noise. Thus, it is desirable to design an adaptive resource allocation scheme among the processes of sensing, on-device computation, and AirComp for minimizing the loss function after each training round.
 
\begin{theorem}\label{theorem:converge}
Based on Lemma \ref{Lem:Per-roundLoss} and given $\alpha^{(t)}= \frac{1}{\sqrt{T}L}$, the ground-truth gradient vector sequence $\left\{ {\bf g}^{(t)}\right\}_{t=0}^T$ converges as follows:
\begin{equation}
    \lim _{T \rightarrow+\infty} \frac{1}{T} \sum_{t=0}^{T-1} \|{\bf g}^{(t)}  \|^2
    \leq G_T,\label{theorem:con}
\end{equation}
where $  G_T 
         = \dfrac{2L}{\sqrt{T}} \left[\mathcal{F}\left( {\bf w}^{(0)} \right)-\mathcal{F}^* \right] + \frac{1}{\sqrt{T} \cdot T} \sum_{t=0}^{T-1}  \bigg\{ \dfrac{\delta_u^2}{\eta^{(t)}} +  \sum\limits_{k=1}^K \dfrac{b_k^{(t)}}{{b^{(t)}}^2} \bigg[ \sigma^2+ A^2(\delta_{k,c}^2+ \dfrac{ \delta_s^2}{P_{k,s}^{(t)}})\bigg] \bigg\}$ and $\lim _{T \rightarrow+\infty} G_T = 0$. 

\end{theorem}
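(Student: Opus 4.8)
The plan is to convert the single-round estimate of Lemma~\ref{Lem:Per-roundLoss} into an average-gradient bound by telescoping. First I would sum inequality~\eqref{Eq:Per-RoundLossReduction} over $t=0,1,\dots,T-1$. On the left the sum collapses, $\sum_{t=0}^{T-1}\mathbb{E}[\mathcal{F}({\bf w}^{(t)})-\mathcal{F}({\bf w}^{(t+1)})]=\mathcal{F}({\bf w}^{(0)})-\mathbb{E}[\mathcal{F}({\bf w}^{(T)})]$ (the initial model being deterministic), and by the lower-boundedness assumption~\eqref{eqn:bounded} this is at most $\mathcal{F}({\bf w}^{(0)})-\mathcal{F}^*$. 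Abbreviating the combined AirComp-and-sensing distortion term by
\begin{equation*}
\Phi^{(t)} := \dfrac{\delta_u^2}{\eta^{(t)}} + \sum_{k=1}^{K}\dfrac{b_k^{(t)}}{{b^{(t)}}^2}\bigg[\sigma^2 + A^2\Big(\delta_{k,c}^2 + \dfrac{\delta_s^2}{P_{k,s}^{(t)}}\Big)\bigg],
\end{equation*}
this yields
\begin{equation*}
\Big(\dfrac{1}{\sqrt{T}L}-\dfrac{1}{2TL}\Big)\sum_{t=0}^{T-1}\|{\bf g}^{(t)}\|^2 \leq \mathcal{F}({\bf w}^{(0)})-\mathcal{F}^* + \dfrac{1}{2TL}\sum_{t=0}^{T-1}\Phi^{(t)}.
\end{equation*}

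Next I would lower-bound the coefficient on the left-hand side. Since $\sqrt{T}\le T$ for every $T\ge 1$, we have $\dfrac{1}{\sqrt{T}L}-\dfrac{1}{2TL}\ge \dfrac{1}{\sqrt{T}L}-\dfrac{1}{2\sqrt{T}L}=\dfrac{1}{2\sqrt{T}L}>0$. Substituting this bound and multiplying both sides by $\dfrac{2L}{\sqrt{T}}$ gives
\begin{equation*}
\dfrac{1}{T}\sum_{t=0}^{T-1}\|{\bf g}^{(t)}\|^2 \leq \dfrac{2L}{\sqrt{T}}\big[\mathcal{F}({\bf w}^{(0)})-\mathcal{F}^*\big] + \dfrac{1}{\sqrt{T}\cdot T}\sum_{t=0}^{T-1}\Phi^{(t)} = G_T,
\end{equation*}
which is precisely \eqref{theorem:con}; letting $T\to\infty$ then gives the claimed limit superior once $G_T\to 0$ is established.

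It remains to show $G_T\to 0$. The first term is manifestly $O(1/\sqrt{T})\to 0$. For the second, I would argue that $\Phi^{(t)}$ is uniformly bounded across rounds: using $\sum_k b_k^{(t)}=b^{(t)}$ and $b^{(t)}\ge 1$ we get $\sum_k \frac{b_k^{(t)}}{{b^{(t)}}^2}=\frac{1}{b^{(t)}}\le 1$, while the noise powers $\sigma^2,\delta_{k,c}^2,\delta_s^2,\delta_u^2$ and the Hessian bound $A$ (Assumption~\ref{Assump:BoundedHessian}) are fixed constants, and the allocated sensing powers $P_{k,s}^{(t)}$ and receive magnitudes $\eta^{(t)}$ are bounded away from zero under the per-round power/energy budgets together with the device-scheduling that removes weak channels. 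Hence $\Phi^{(t)}\le D$ for a constant $D$ independent of $t$, so $\frac{1}{\sqrt{T}\cdot T}\sum_{t=0}^{T-1}\Phi^{(t)}\le \frac{D}{\sqrt{T}}\to 0$, and therefore $\lim_{T\to\infty}G_T=0$.

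The main obstacle — indeed the only step that is not purely mechanical — is this uniform boundedness of $\Phi^{(t)}$: one has to invoke the standing assumptions that the resource allocation keeps the sensing SNR $P_{k,s}^{(t)}/\delta_s^2$ and the AirComp SNR $\eta^{(t)}/\delta_u^2$ from degenerating across rounds, since otherwise $\delta_s^2/P_{k,s}^{(t)}$ or $\delta_u^2/\eta^{(t)}$ could grow without bound and the Cesàro average in $G_T$ need not vanish. Everything else reduces to a telescoping sum and the elementary estimate $\sqrt{T}\le T$.
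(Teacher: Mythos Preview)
Your proposal is correct and follows essentially the same route as the paper: telescope the per-round bound~\eqref{Eq:Per-RoundLossReduction}, use $\sqrt{T}\le T$ to replace the coefficient $\tfrac{1}{\sqrt{T}L}-\tfrac{1}{2TL}$ by $\tfrac{1}{2\sqrt{T}L}$, and rescale to obtain $G_T$; the only cosmetic difference is that the paper bounds the coefficient round-by-round before summing, whereas you sum first and then bound. Your explicit justification that $G_T\to 0$ via uniform boundedness of $\Phi^{(t)}$ is a detail the paper leaves implicit.
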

\begin{proof}
	Please refer to Appendix \ref{_proof_of_theorem:converge}.
\end{proof}
\begin{remark}[Task-Oriented ISCC for Air-FEEL]
The sensing, on-device computation, and AirComp processes are highly coupled in the ISCC-enabled Air-FEEL task, as shown in Fig. \ref{Fig:Coupling}. According to Theorem \ref{theorem:converge}, the convergence rate depends on the SNRs of both sensing and AirComp, and the number of obtained and locally updated sensing data samples in each training round. The three processes compete for more time and energy for maximizing their own convergence contribution. Besides, the number of sensed local samples of each device decides its computation cost. Therefore, task-oriented ISCC schemes are desirable for enhancing the convergence performance.
\end{remark}
\begin{figure}[t]
  \centering
  \includegraphics[width=0.5\textwidth]{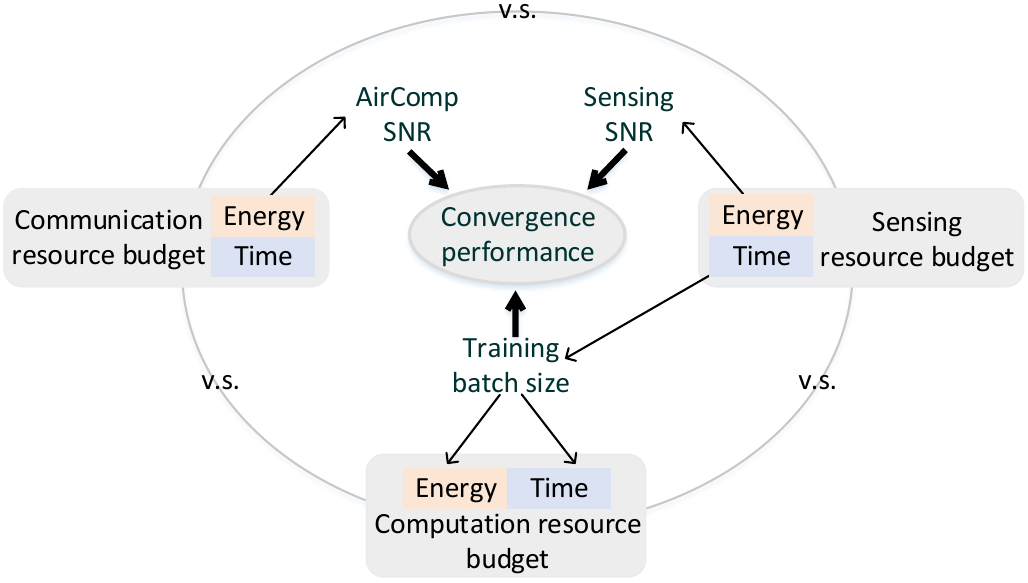}
  \caption{Coupling mechanism of sensing, communication, and computation in Air-FEEL.}
  \label{Fig:Coupling}
\end{figure}

\vspace{-15pt}
\section{Joint Batch Size and Resource Allocation Optimization }

With the obtained convergence analysis of ISCC-enabled Air-FEEL in the preceding section, the joint batch size and resource allocation optimization policies for accelerating the convergence is ready to shown in this section.

\vspace{-15pt}
\subsection{Problem Formulation}
Without loss of generality, we aim at accelerating Air-FEEL's convergence process by maximizing the per-round loss function degradation, i.e., $\mathbb{E} \left[\mathcal{F}\left({\bf w}^{(t)}\right) - \mathcal{F}\left({\bf w}^{(t+1)}\right) \right]$ for each particular round $t$, 
via optimizing the AirComp receive signal power $\eta^{(t)}$, the local and overall batch sizes $\{b_k^{(t)}\}$ and $b^{(t)}$, the sensing power $\{P_{k,s}^{(t)}\}$, and the computation speeds $\{f_k^{(t)}\}$. 

Due to network resource limitations, there are three key constraints on the total batch size, latency, and energy consumption, each of which is introduced below.



\subsubsection{Constraint on Training Batch Size}
The total number of samples obtained by all devices, i.e., the overall batch size, in round $t$ is given by
\begin{equation}
\mathcal{C}_1:\; b^{(t)} = \sum\limits_{k=1}^K b_k^{(t)}.
\end{equation}


\subsubsection{Latency Constraint}
The latency of each device $k$ in round $t$ includes the sensing time defined in \eqref{Eq:SensingTime}, the on-device computation time defined in \eqref{Eq:ComputationTime}, and the uploading time \eqref{Eq:UploadingTime}. The total latency of each device should not exceed the allowed latency for completing this round denoted by $\mathcal{T}$, i.e.,
\begin{equation}
    T_{k,s}^{(t)} + T_{k, c}^{(t)} + T_{u}^{(t)} \leq \mathcal{T}, \; 1 \leq k \leq K,
\end{equation}
which, by substituting \eqref{Eq:SensingTime}, \eqref{Eq:ComputationTime}, and \eqref{Eq:UploadingTime}, is re-expressed as
\begin{equation}
\mathcal{C}_2:\; b_k^{(t)} \tau_s + \dfrac{b_k^{(t)} C}{f_k} + \left\lceil  \dfrac{N}{M} \right\rceil \tau_u \leq \mathcal{T}, \; 1 \leq k \leq K.
\end{equation}

\subsubsection{Energy Consumption Constraint}
The energy consumption of each device $k$ in round $t$ includes the sensing energy consumption in \eqref{Eq:SensingEnergy}, the on-device computation sensing energy consumption in \eqref{Eq:ComputationEnergy}, and the uploading energy consumption in \eqref{Eq:UploadingEnergy}. The total energy consumption of each device should not exceed its budget $E_k$, i.e.,
\begin{equation}
    E_{k,s}^{(t)} +   E_{k,c}^{(t)} + E_{k,u}^{(t)} \leq E_k, \; 1\leq k \leq K,
\end{equation}
which, by substituting \eqref{Eq:SensingEnergy}, \eqref{Eq:ComputationEnergy}, and \eqref{Eq:UploadingEnergy} is given by
\begin{equation}
\mathcal{C}_3:\; P_{k,s}^{(t)} b_k^{(t)} \tau_s + \Omega_k  b_k^{(t)} C f_k^2 + \dfrac{ \eta^{(t)} { b_k^{(t)}}^2 N \tau_u  }{  H_k^{(t)} {b^{(t)}}^2 }  \leq E_k,
\end{equation}
for $1\leq k \leq K$.

Since the exact value of $\mathbb{E} \left[\mathcal{F}\left({\bf w}^{(t)}\right) - \mathcal{F}\left({\bf w}^{(t+1)}\right) \right]$ is not available, a common approach is maximizing its lower
bound approximation defined in the right side of \eqref{Eq:Per-RoundLossReduction} instead (see, e.g., \cite{9606731,zeng2024federated,9252950}). Note that the norm of the ground-true gradient vector $\|{\bf g}^{(t)}  \|$, the number of total training iterations $T$, and the  smoothness parameter $L$ are irrelevant to the ISCC resource allocation, and as a result, maximizing the right side of \eqref{Eq:Per-RoundLossReduction} is equivalent to
\begin{equation*}\label{equ:P_1}
\mathcal{P}_1: 
	\begin{array}{cl}
		\underset
		{
			{
				\substack{ \eta^{(t)},   \{  b_k^{(t)}\}, b^{(t)}, \\ \\  \{P_{k,s}^{(t)} \}, \{f_k^{(t)}\} } 
			}
		}
		{
			\min
		}
		&	
		\dfrac{\delta_u^2}{\eta^{(t)}} +  \sum\limits_{k=1}^K \dfrac{b_k^{(t)}}{{b^{(t)}}^2} \bigg[ \sigma^2+ A^2(\delta_{k,c}^2+ \dfrac{ \delta_s^2}{P_{k,s}^{(t)}})\bigg],
		\\
		\text{s.t.}
		& \mathcal{C}_1 \sim \mathcal{C}_3.
	\end{array}
\end{equation*}
By solving $\mathcal{P}_1$ in all rounds, the convergence is accelerated. In the sequel, we consider an arbitrary round and the
notation $(t)$ is omitted for simplicity.
Note that $\mathcal{P}_1$ is a non-convex problem with highly coupled sensing, computation, and communication parameters among different devices. To tackle it, the following variable transformation is first made:
\begin{equation}
\alpha_k = b_k/b,\; 1\leq k \leq K.
\end{equation}
It follows that $\mathcal{P}_1$ is equivalently derived as
	\begin{align}
		\mathcal{P}_2: 	
 \min\limits_{ \substack{ \eta,   \{  b_k\},  \{ \alpha_k\} \\ \\  \{P_{k,s} \}, \{f_k\} } }
	    ~&	\dfrac{\delta_u^2}{\eta} +  \sum\limits_{k=1}^K \dfrac{\alpha_k^2}{ b_k } \bigg[ \sigma^2+ A^2(\delta_{k,c}^2+ \dfrac{ \delta_s^2 }{ P_{k,s} })\bigg],\notag
		\\
		\text{s.t.}~~~~~
		~& \sum\limits_{k=1}^K \alpha_k =1,\label{equ:P_2alpha}\\
		&  b_k \tau_s + \dfrac{b_k C}{f_k} + \left\lceil  \dfrac{N}{M} \right\rceil \tau_u \leq \mathcal{T}, \; \forall k,\label{equ:P_2time}\\
		& \! P_{k,s} b_k \tau_s +\! \Omega_k  b_k C f_k^2\! +\! \dfrac{\eta \alpha_k^2 N \tau_u }{H_k}  \!   \leq \!E_k,\label{equ:P_2energy}
	\end{align}
for all $k$.
However, $\mathcal{P}_2$ still remains non-convex due to the coupled variables therein. To this end, we decompose it into two convex sub-problems and adopt the method of alternating optimization to find its solution, as described below.

\subsection{Batch Size Optimization} 
The first sub-problem optimizes batch sizes of all devices, i.e. $\{\alpha_k\}$ and $\{b_k\}$, with given variables of $\{P_{k,s}\}$, $\{f_k\}$, and $\eta$, which is reformulated as 
\begin{align}	
\mathcal{P}_3: \min\limits_{  \{  b_k\},  \{ \alpha_k\} }
	    ~&	 \sum\limits_{k=1}^K \dfrac{\alpha_k^2}{ b_k } \bigg[ \sigma^2+ A^2(\delta_{k,c}^2+ \dfrac{ \delta_s^2 }{ P_{k,s} })\bigg],\notag\\
		\text{s.t.}~~~~~& \eqref{equ:P_2alpha},~\eqref{equ:P_2time},~\text{and} ~ \eqref{equ:P_2energy}.\notag
\end{align}
To handle the integer variables, $b_k$ are relaxed to be continuous and are rounded to integers for implementation, with negligible loss due to the typically large magnitudes. It is easy to show that $\mathcal{P}_3$ is convex as all constraints form a convex set and the objective function is convex due to the convexity of $\alpha_k^2/b_k$. 
\subsubsection{Feasibility Checking} 

To begin with, it is necessary to check whether  $\mathcal{P}_3$ is feasible under given network conditions and given variables of $\{P_{k,s}\}$, $\{f_k\}$, and $\eta$. The feasibility can be checked by equivalently solving the following problem: 
\begin{equation}
\begin{array}{cl}
    \underset
    {
        {
            \substack{\{  b_k\}, \{  \alpha_k\}}
        }
    }
    {
        \max 
    }
     &\sum\limits_{k=1}^K \alpha_k,
    \\
    \text{s.t.}     & \eqref{equ:P_2time}~\text{and} ~ \eqref{equ:P_2energy}.
\end{array}
\label{Eq:Feasibility}
\end{equation}
If the optimal value achieved by problem \eqref{Eq:Feasibility} is no smaller than $1$, then problem $\mathcal{P}_3$ is feasible, since all constraints therein can be satisfied. Note that problem \eqref{Eq:Feasibility} is convex and can be solved using existing algorithms such as interior point methods \cite{potra2000interior}.

\subsubsection{Optimal Solution to Problem $\mathcal{P}_3$} 

Notice that problem $\mathcal{P}_3$ is convex and satisfies the Slater's condition, and thus strong duality holds between the primal problem and its dual problem. Therefore, we apply the Karush–Kuhn–Tucker (KKT) condition to optimally solve problem  $\mathcal{P}_3$. 
Then, we have the following lemma, where $\mu$, $\{\gamma_k\}$, $\{\lambda_k\}$ are the dual variables associated with constraints \eqref{equ:P_2alpha}, \eqref{equ:P_2time}, and  \eqref{equ:P_2energy}, respectively.

\begin{lemma}\label{Lem:P3}
The optimal solution to problem $\mathcal{P}_3$ is derived as 
\begin{align}
&\alpha_k^{\star} =  \dfrac{ (- \mu^{\star} - 2\sqrt{  J_k M_k    })H_k }{  2\lambda_k^{\star} \eta  N \tau_u }, \; \forall k,\label{Eq:LemP3:alpha}\\
&b_k^{\star} = \dfrac{H_k}{ 2\lambda_k \eta  N \tau_u } \bigg( -\mu^{\star} \sqrt{ \dfrac{ M_k }{ J_k } }  - 2M_k  \bigg), \; \forall k,\label{Eq:LemP3:b}
\end{align}
with $J_k = \gamma_k^{\star} (\tau_s + C/ f_k ) +\lambda_k^{\star} (   P_{k,s} \tau_s + \Omega_k  C f_k^2 )$ and $M_k =  \sigma^2+ A^2(\delta_{k,c}^2+  \delta_s^2 / P_{k,s} )$, where $\mu^{\star}$, $\{\gamma_k^{\star}\}$, and $\{\lambda_k^{\star}\}$ are the optimal dual variables that can be obtained via the primal-dual algorithm with a complexity of $\mathcal{O}(K^2)$.
\end{lemma}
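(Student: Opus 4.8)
The plan is to derive \eqref{Eq:LemP3:alpha}--\eqref{Eq:LemP3:b} by writing out and solving the KKT system of $\mathcal{P}_3$. As already observed, $\mathcal{P}_3$ minimizes a convex objective over an affine feasible set and satisfies Slater's condition, so strong duality holds and the KKT conditions are both necessary and sufficient for global optimality; it therefore suffices to exhibit a primal point $(\{b_k^\star\},\{\alpha_k^\star\})$ together with multipliers $\mu^\star$ for the equality \eqref{equ:P_2alpha} and $\gamma_k^\star,\lambda_k^\star\ge 0$ for the latency \eqref{equ:P_2time} and energy \eqref{equ:P_2energy} constraints satisfying stationarity, primal and dual feasibility, and complementary slackness. (Feasibility of $\mathcal{P}_3$ under the given $\{P_{k,s}\},\{f_k\},\eta$ is assumed throughout, as it is checked separately via problem \eqref{Eq:Feasibility}.)

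First I would form the Lagrangian with the convention $\mathcal{L}\supset\mu\big(\sum_k\alpha_k-1\big)$ and write the two stationarity equations. Differentiating in $b_k$ gives $-M_k\alpha_k^2/b_k^2 + J_k = 0$, where $J_k=\gamma_k(\tau_s+C/f_k)+\lambda_k(P_{k,s}\tau_s+\Omega_k C f_k^2)$ collects precisely the coefficients of $b_k$ in the two inequality constraints and $M_k=\sigma^2+A^2(\delta_{k,c}^2+\delta_s^2/P_{k,s})$. Taking the positive root (as $b_k,\alpha_k>0$) gives the key \emph{ratio identity} $\alpha_k/b_k=\sqrt{J_k/M_k}$, which is independent of the magnitudes themselves and is what decouples the two batch-size variables. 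Differentiating in $\alpha_k$ gives $2M_k\alpha_k/b_k+\mu+2\lambda_k\eta N\tau_u\alpha_k/H_k=0$; substituting the ratio identity (so that $M_k\alpha_k/b_k=\sqrt{J_k M_k}$) reduces this to a single linear equation in $\alpha_k$ whose solution is exactly \eqref{Eq:LemP3:alpha}. Plugging that back into $b_k^\star=\alpha_k^\star\sqrt{M_k/J_k}$ yields \eqref{Eq:LemP3:b}.

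It then remains to pin down the multipliers so that the primal point is feasible. In the operating regime of interest---all scheduled devices participate with $\alpha_k^\star>0$ and the per-device energy budget is the binding resource---complementary slackness forces \eqref{equ:P_2energy} to hold with equality, which both justifies dividing by $\lambda_k^\star>0$ in the stated formulas and determines $\lambda_k^\star$ from $P_{k,s}b_k^\star\tau_s+\Omega_k b_k^\star C f_k^2+\eta(\alpha_k^\star)^2 N\tau_u/H_k=E_k$; likewise $\gamma_k^\star=0$ unless \eqref{equ:P_2time} is tight, in which case it is fixed by that equality, and $\mu^\star$ is fixed by the simplex constraint $\sum_k\alpha_k^\star=1$. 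This leaves $2K+1$ coupled scalar equations in $(\mu^\star,\{\gamma_k^\star\},\{\lambda_k^\star\})$; since every trial value of the duals produces all $(\alpha_k,b_k)$ and the corresponding constraint residuals in $\mathcal{O}(K)$ work, a primal--dual (dual-ascent) iteration over these $\mathcal{O}(K)$ duals solves them at the stated $\mathcal{O}(K^2)$ cost. Finally, feasibility and positivity force $-\mu^\star-2\sqrt{J_k M_k}>0$, so the closed forms are well defined and nonnegative, and by sufficiency of the KKT conditions the resulting point is the global optimum.

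The step I expect to be the main obstacle is the multiplier bookkeeping rather than the calculus: arguing cleanly that at the optimum every energy constraint is active (equivalently $\lambda_k^\star>0$) so that \eqref{Eq:LemP3:alpha}--\eqref{Eq:LemP3:b} are well posed, ruling out the degenerate KKT branches in which a device is effectively switched off ($\alpha_k^\star=0$ with $\lambda_k^\star=0$), and verifying that the reduced dual system is consistent and admits the claimed low-complexity primal--dual search. Once the ratio identity $\alpha_k/b_k=\sqrt{J_k/M_k}$ has been spotted, the elimination producing \eqref{Eq:LemP3:alpha} and \eqref{Eq:LemP3:b} is routine.
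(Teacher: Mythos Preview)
Your proposal is correct and follows essentially the same approach as the paper: form the Lagrangian of the convex problem $\mathcal{P}_3$, write the two stationarity conditions in $b_k$ and $\alpha_k$, and combine them to obtain the closed forms, then recover the multipliers by a primal--dual iteration. Your ``ratio identity'' $\alpha_k/b_k=\sqrt{J_k/M_k}$ is exactly the manipulation the paper hides behind ``with some manipulations,'' and your explicit discussion of complementary slackness (in particular why $\lambda_k^\star>0$ so that the denominators make sense) is a point the paper's proof leaves implicit.
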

\begin{proof}
	Please refer to Appendix \ref{Apdx:LemP3}.
\end{proof}
By substituting $\alpha_k^{\star} = b_k^{\star}/b$ into Lemma~\ref{Lem:P3}, the overall batch size is 
\begin{equation}\label{Eq:Insight1}
b^{\star} =  \sqrt{ \dfrac{  \sigma^2+ A^2(\delta_{k,c}^2+  \delta_s^2 / P_{k,s} ) }{ \gamma_k^{\star} (\tau_s + C/ f_k ) +\lambda_k^{\star} (   P_{k,s} \tau_s + \Omega_k  C f_k^2 ) } },\; \forall k.
\end{equation}
\begin{remark}[Optimal Batch Sizes]
According to Lemma~\ref{Lem:P3} and \eqref{Eq:Insight1}, the optimal batch sizes have the following properties. To begin with, the local batch size of device $k$, i.e., $b_k^{\star}$, is proportional to its channel gain $H_k$, as more energy can be used for sensing and computation with a stronger channel. Besides, the overall batch size $b$ decreases with the consumed sensing power to obtain one sample, i.e., $P_{k,s}$, and the computation load to update the local model using one data sample, i.e., $C$. Moreover, both $\{b_k^{\star}\}$ and $b^{\star}$ first increase with the computation speeds $\{f_k\}$ but then decrease, since higher computation speeds lead to lower latency but higher energy consumption.
\end{remark}



\subsection{Resource Allocation Optimization}
The second sub-problem of resource allocation optimizes the variables of  $\{P_{k,s}\}$, $\{f_k\}$, and $\eta$, with given $\{\alpha_k\}$ and $\{b_k\}$, which is given by 
\begin{align}
		\!\!\!\mathcal{P}_4:	\min\limits_{\eta,\{P_{k,s}\}, \{f_k\}}&	
		\dfrac{\delta_u^2}{\eta} +  \sum\limits_{k=1}^K \dfrac{\alpha_k^2}{ b_k } \bigg[ \sigma^2+ A^2(\delta_{k,c}^2+ \dfrac{ \delta_s^2 }{ P_{k,s} })\bigg],
		\\
		\text{s.t.}~~~ ~~& \eqref{equ:P_2time}~\text{and} ~ \eqref{equ:P_2energy}.\notag
\end{align}
In $\mathcal{P}_4$, it is easy to observe that the objective function is the sum of multiple inversely proportional functions and the two constraints form a convex set. Hence, $\mathcal{P}_4$ is convex, which can be optimally solved by using the KKT conditions as shown in the following lemma.
Let $\{\phi_k\}$ and $\{\psi_k\}$ denote the dual variables associated with constraints \eqref{equ:P_2time} and \eqref{equ:P_2energy} in problem $\mathcal{P}_4$, respectively.
\begin{lemma}\label{Lem:P4}
The optimal solution to problem $\mathcal{P}_4$, denoted by $\{P_{k,s}^{\star}\}$, $\{f_k^{\star}\}$, and $\eta^{\star}$, are given by
\begin{align}
&P_{k,s}^{\star} =\dfrac{ \alpha_k A \delta_s}{ b_k \sqrt{ \psi_k^{\star} \tau_s} } = \dfrac{ A \delta_s}{ b \sqrt{ \psi_k^{\star} \tau_s} }, ~\forall k,\label{Lem:P4:P}\\
&\eta^{\star} =\sqrt{ \delta_u^2 \bigg( \sum\limits_{k=1}^K\frac{\psi_k^{\star} \alpha_k^2 N \tau_u  }{  H_k  } \bigg)^{-1} },\label{Lem:P4:eta}\\
&f_k^{\star} = \dfrac{ b_k C}{ {\mathcal{T}-\left\lceil  \dfrac{N}{M} \right\rceil \tau_u-b_k \tau_s } }, \; \forall k,\label{Eq:OptimalFrequency}
\end{align}
where $\{\phi_k^{\star}\}$ and $\{\psi_k^{\star}\}$ are the optimal dual variables that could be obtained by using the primal-dual algorithm with a complexity of $\mathcal{O}(K^2)$. 
\end{lemma}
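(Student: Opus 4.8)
\textbf{Proof proposal for Lemma~\ref{Lem:P4}.}
The plan is to exploit the fact that $\mathcal{P}_4$ is convex with a separable inversely-proportional objective and only two families of constraints, so that the KKT conditions yield the stated closed forms. First I would write the Lagrangian
\begin{equation}\label{Eq:P4Lagrangian}
\begin{aligned}
\mathcal{L} ={}& \dfrac{\delta_u^2}{\eta} +  \sum\limits_{k=1}^K \dfrac{\alpha_k^2}{ b_k } \bigg[ \sigma^2+ A^2\Big(\delta_{k,c}^2+ \dfrac{ \delta_s^2 }{ P_{k,s} }\Big)\bigg]\\
&+ \sum\limits_{k=1}^K \phi_k \Big( b_k \tau_s + \dfrac{b_k C}{f_k} + \big\lceil  \tfrac{N}{M} \big\rceil \tau_u - \mathcal{T} \Big)\\
&+ \sum\limits_{k=1}^K \psi_k \Big( P_{k,s} b_k \tau_s + \Omega_k  b_k C f_k^2 + \dfrac{\eta \alpha_k^2 N \tau_u }{H_k}  - E_k \Big),
\end{aligned}
\end{equation}
with $\phi_k,\psi_k \geq 0$, and then set $\partial\mathcal{L}/\partial P_{k,s}=0$, $\partial\mathcal{L}/\partial\eta=0$, and $\partial\mathcal{L}/\partial f_k=0$. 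The stationarity condition in $P_{k,s}$ gives $-A^2\delta_s^2\alpha_k^2/(b_k P_{k,s}^2) + \psi_k b_k\tau_s = 0$, which solves to $P_{k,s}^{\star}= \alpha_k A\delta_s/(b_k\sqrt{\psi_k^{\star}\tau_s})$, and substituting $\alpha_k=b_k/b$ yields the second form in \eqref{Lem:P4:P}. Similarly, $\partial\mathcal{L}/\partial\eta = -\delta_u^2/\eta^2 + \sum_k \psi_k\alpha_k^2 N\tau_u/H_k = 0$ gives \eqref{Lem:P4:eta} directly.

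The remaining step is to pin down $f_k^{\star}$ in \eqref{Eq:OptimalFrequency}. Here I would argue that the latency constraint \eqref{equ:P_2time} must be active at the optimum: the objective of $\mathcal{P}_4$ does not depend on $f_k$, so $f_k$ only enters through the constraints, and since the energy constraint \eqref{equ:P_2energy} is tighter for larger $f_k$ (the term $\Omega_k b_k C f_k^2$ is increasing in $f_k$) while the latency constraint is looser for larger $f_k$, the least energy-consuming feasible choice is the smallest $f_k$ allowed by latency. Hence \eqref{equ:P_2time} holds with equality, i.e. $b_k\tau_s + b_k C/f_k + \lceil N/M\rceil\tau_u = \mathcal{T}$, which rearranges to \eqref{Eq:OptimalFrequency}. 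Equivalently, one can read this off the KKT system: complementary slackness together with $\partial\mathcal{L}/\partial f_k = -\phi_k b_k C/f_k^2 + 2\psi_k\Omega_k b_k C f_k = 0$ forces $\phi_k>0$ (since $\psi_k>0$ whenever the energy budget binds, or one treats the boundary case separately), so \eqref{equ:P_2time} is tight.

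Finally I would note that $\{\phi_k^{\star}\}$ and $\{\psi_k^{\star}\}$ are determined by substituting the closed forms above back into the active constraints \eqref{equ:P_2time} and \eqref{equ:P_2energy}, producing a system that the primal-dual (subgradient ascent on the dual) algorithm solves; each dual update evaluates $K$ primal expressions and $K$ constraint residuals, giving per-iteration cost $\mathcal{O}(K)$ and overall $\mathcal{O}(K^2)$ under a fixed iteration budget, matching the claim. The main obstacle I anticipate is the careful handling of $f_k^{\star}$: establishing rigorously that the latency constraint is active requires ruling out degenerate cases (e.g. when the energy constraint is slack for all feasible $f_k$, or when $\mathcal{T}-\lceil N/M\rceil\tau_u - b_k\tau_s \le 0$ so that no feasible $f_k$ exists), and ensuring the resulting $P_{k,s}^{\star}$, $\eta^{\star}$ are simultaneously feasible with respect to \eqref{equ:P_2energy}; the rest is routine differentiation of inverse-proportional terms.
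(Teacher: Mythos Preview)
Your proposal is correct and follows essentially the same route as the paper: form the Lagrangian with multipliers $\{\phi_k\},\{\psi_k\}$, read off $P_{k,s}^{\star}$ and $\eta^{\star}$ from the stationarity conditions, argue (via $\psi_k>0\Rightarrow\phi_k>0$ in the $f_k$-stationarity equation and complementary slackness) that the latency constraint is active to obtain \eqref{Eq:OptimalFrequency}, and recover the dual variables by a primal--dual iteration. Your additional monotonicity argument for the activeness of \eqref{equ:P_2time} is a nice complement that the paper does not include.
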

\begin{proof}
Please refer to Appendix \ref{Apdx:LemP4}.
\end{proof}
\begin{remark}[Optimal Resource Allocation]
According to Lemma \ref{Lem:P4}, the optimal computation speeds of each device $k$, i.e., $f_k^{\star}$, is the minimum one that satisfies the latency constraint, since a lower computation speed also alleviates the computational energy consumption and more energy budgets can be used for sensing and communication for either enhancing their qualities or obtaining more data samples. Besides, for each device $k$, less sensing power $P_{k,s}^{\star}$ should be allocated if a larger batch size $b_k$ is obtained.
\end{remark}


\subsection{Overall Alternating Optimization Algorithm }
Based on the solutions of the two convex sub-problems $\mathcal{P}_3$ and $\mathcal{P}_4$, the alternating optimization is adopted to solve $\mathcal{P}_2$, as summarized in Algorithm \ref{Alg:Solution}. At the beginning of the algorithm, initialization is made to guarantee the feasibility of $\mathcal{P}_3$ via selecting initial values of $\{P_{k,s}\}$, $\{f_k\}$, and $\eta$ that ensure the maximum of the feasibility problem in \eqref{Eq:Feasibility} no less than 1. Then, $\mathcal{P}_3$ and $\mathcal{P}_4$ are sequentially and iteratively solved by fixing the variables of each other. 
Since both $\mathcal{P}_3$ and $\mathcal{P}_4$ are convex and the optimal solutions can be achieved, all subsequent sub-problems are feasible as long as the first one is feasible. {\color{black}The convergence of Algorithm \ref{Alg:Solution} is guaranteed since each step in the iteration leads to a non-increasing objective value in $\mathcal{P}_2$ and the objective functiointegran of $\mathcal{P}_2$ is lower bounded.}

\begin{algorithm}[]
	\caption{Joint Design of Batch Size and Resource Allocation}\label{Alg:Solution}
	\LinesNumbered
	\KwIn{ $H_k$,  $\mathcal{T}$ and \{$E_{k}$\}. } 
	\textbf{Loop}\\ 
     \quad \textbf{Initialize} $\{P_{k,s}\}$, $\{f_k\}$, and $\eta$.\\
	\textbf{Until} $\mathcal{P}_3$ is feasible.\\
    \textbf{Loop}\\   
     \quad Obtain optimal solution to $\mathcal{P}_3$ via Lemma \ref{Lem:P3}.\\
	\quad Obtain optimal solution to $\mathcal{P}_4$ via Lemma \ref{Lem:P4}. 
 
    
    \textbf{Until Convergence}.\\
    \KwOut{ $\{\alpha_k\}$, $\{b_k\}$, \{$P_{k,s}$\}, \{$f_{k}$\}, and $\eta$.}
\end{algorithm}
\vspace{-15pt}
{\color{black}Algorithm \ref{Alg:Solution} needs to be executed to solve $\mathcal{P}_2$, which is equivalent to $\mathcal{P}_1$, only when the network settings change, including channel state information, energy budgets, sensing distortion, and communication noise. Particularly, in the case where the network settings vary over different training rounds, Algorithm \ref{Alg:Solution} should be executed in each round to solve $\mathcal{P}_2$. The complexity of Algorithm 1 depends on alternatively solving the two convex subproblems $\mathcal{P}_3$ and $\mathcal{P}_4$, whose computation complexities are both $\mathcal{O}(K^2)$ with $K$ being the number of devices. Note that the execution of Algorithm \ref{Alg:Solution} is at the server with powerful processors, its computation load can be ignored.}
\vspace{-15pt}
\section{Simulation Results}
\subsection{Simulation Settings}
In the experiment, we consider an Air-FEEL network with one server and $K = 6$ ISAC devices. The downlink channel gains of devices are assumed to be Rayleigh fading. The path loss is $10^{-3}$. We adopt the wireless sensing dataset in \cite{liu2022toward} to train ResNet-10 \cite{he2016deep}, which has 4,900,677 model parameters in total. The sensing task is to classify seven different human motions, i.e., standing, adult pacing, child pacing, adult walking, child walking, adult walking, and child walking. The learning rate is 0.1. Other related parameters are listed in Table \ref{table:para} by default.
\begin{table}[htbp]
	\centering
	\caption{Simulation parameter settings}
	\label{table:para}
	\begin{tabular}{|c|c|c|}
		\hline
		Parameter          & Description                                                              & Value                                  
                   \\ \hline
		$f^{\max}_k$          & \begin{tabular}[c]{@{}c@{}} Maximal CPU frequency \\of device $k$ ($\mathrm{Hz}$)       \end{tabular}                               & $\operatorname{Unif}(0.4,2) \times 10^9 $                                                  \\ \hline
		$p^{\max}_{k,s}$          & \begin{tabular}[c]{@{}c@{}}Maximal sensing \\power of device $ k $ ($\mathrm{W}$)    \end{tabular}                                & $\operatorname{Unif}(0.6,3) \times 10^{-2}$                                                 \\ \hline
		$\Omega_k$            & \begin{tabular}[c]{@{}c@{}}CPU constant\\ of device $ k $   \end{tabular}                                & $\operatorname{Unif}(0.2,1) \times 10^{-26}  $                                         \\ \hline
  		$C$              & \begin{tabular}[c]{@{}c@{}} CPU cycles \\for one sample \end{tabular}                              & $2.5\times 10^{7}$                            
    \\ \hline
		$\sigma^2$            
  & \begin{tabular}[c]{@{}c@{}} Noise spectral \\density  ($\mathrm{W/Hz}$)  
  \end{tabular}       & $10^{-10}  $                                                                                                  \\ \hline
		$\tau_s$              
  & \begin{tabular}[c]{@{}c@{}} Sensing one \\data sample time (s)   
  \end{tabular}                                    & $0.5$                               
  \\ \hline
		$\tau_u$              
  & \begin{tabular}[c]{@{}c@{}} Transmitting one \\gradient element time (s)   
  \end{tabular}                                    & $0.02$                               
  \\ \hline
		$\delta_{k,c}^2$              & \begin{tabular}[c]{@{}c@{}} variance of residual \\ clutter after cancellation \end{tabular}                              & $1$
  \\ \hline
	\end{tabular}
\end{table}

\subsection{Effects of Sensing, Communication, and Computation}
For demonstrating how the Air-FEEL performance is affected by sensing, communication, and computation, Fig. \ref{Fig.conver} presents the influence of normalized AirComp distortion variance (i.e., $\delta_u^2/\eta$), the normalized sensing noise variance on each device (i.e., $\delta_s^2/P_{k,s}$), and the batch size of each device (i.e., $b_k$) on the testing accuracy and convergence behavior. Specifically, in Fig.  \ref{Fig.aircomploss}, we set $b_k = 350,\; \forall k$ and $\delta_s^2/P_{k,s} = 0,\; \forall k$. 
It is observed that a higher testing accuracy and a faster convergence rate are achieved when the normalized AirComp distortion variance is low. Particularly, with larger normalized AirComp distortion (e.g., $\delta_u^2/\eta\geq1e-3$), the model training oscillates and may even not converge. In Fig. \ref{Fig.sensloss}, we set $b_k = 350,\; \forall k$ and $\delta_u^2/\eta = 0$. Similarly, low normalized sensing noise variance leads to better learning performance in terms of testing accuracy and convergence rate.
In Fig. \ref{Fig.batchloss}, we set  $\delta_s^2/P_{k,s} = 0,\; \forall k$ and $\delta_u^2/\eta = 0$. It shows that a larger training batch size in each round can enhance the testing accuracy as well as the convergence rate.


The observations above verify the convergence analysis in Section \ref{Sect:Convergence}, i.e., the learning performance is affected by the normalized AirComp distortion variance, the normalized sensing noise variance on each device, and the batch size of each device.



\begin{figure}[htbp]
    \begin{center}
    \subfigure[Testing accuracy v.s. Aircomp noise]
    {
    \label{Fig.aircomploss}
    \includegraphics[width=0.45\textwidth]{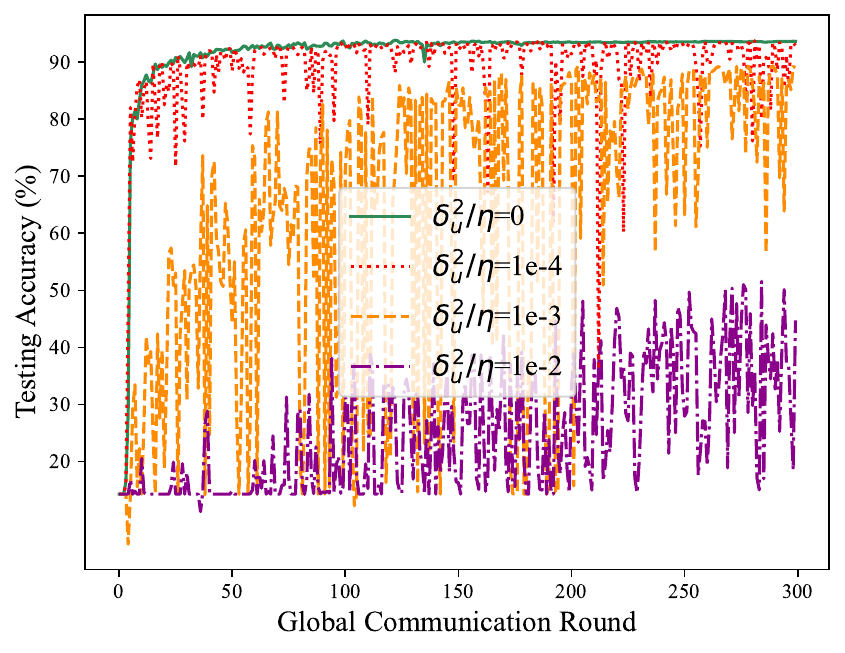}
    }
    
    \subfigure[Testing accuracy v.s. sensing noise]
    {
    \label{Fig.sensloss}
    \includegraphics[width=0.45\textwidth]{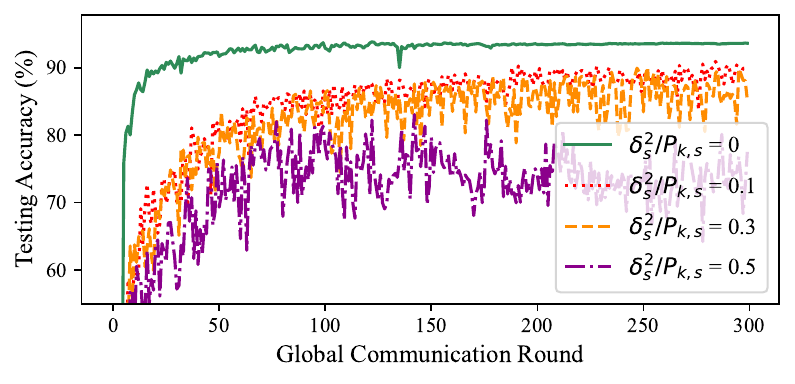}
    }

    \subfigure[Testing accuracy v.s. batch size]
    {
    \label{Fig.batchloss}
    \includegraphics[width=0.45\textwidth]{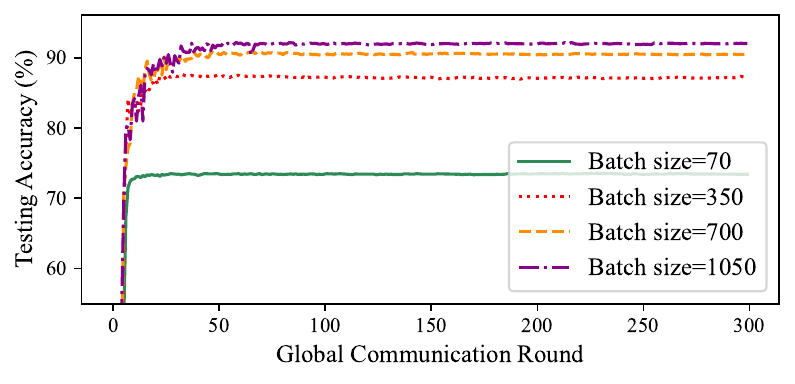}
    }    
    \caption{Effects of network resources on convergence.}%
    \label{Fig.conver}
    \end{center}
\end{figure}

\subsection{Effects of Network Resources}
To verify the superiority of the proposed ISCC scheme, the following benchmark schemes are compared under different network resources. \textcolor{black}{All these benchmark schemes demonstrate single-dimensional optimizations focusing on isolated subsystems, lacking the proposed framework's tripartite synergy across sensing, computation, and communication processes. We also compare the proposed ISCC scheme with other ISCC works, e.g., the scheme with OMA in \cite{liu2022toward}. }
\begin{itemize}
    \item \textbf{Fixed power design:} We fix the sensing power as the maximum sensing power, i.e., $P_{k,s} = P_k^{max}, \forall k$, and then optimize the remaining variables the same as the proposed one. 
    \item \textbf{Fixed computation frequency design:} We fix the computation frequency as the maximum computation frequency, i.e., $f_{k} = f_k^{max}, \forall k$, and then optimize the remaining variables the same as the proposed one.
    \item \textbf{Fixed magnitude design:} We fix the receive signal magnitude as $\eta = 0.01$, and then optimize the remaining variables the same as the proposed one.
    \item \textbf{Fixed batch size design:} We merely optimize the sub-problem of resource allocation and fix the batch size as the strictest requirements, i.e., $b_k=250, \forall k$, in per-round latency simulations and $b_k=500, \forall k$, in energy simulations.
\end{itemize}    

\begin{figure*}[htbp]
    \begin{center}
    \subfigure[Testing accuracy v.s. delay threshold]
    {
    \label{Fig.acct}
    \includegraphics[width=0.41\textwidth]{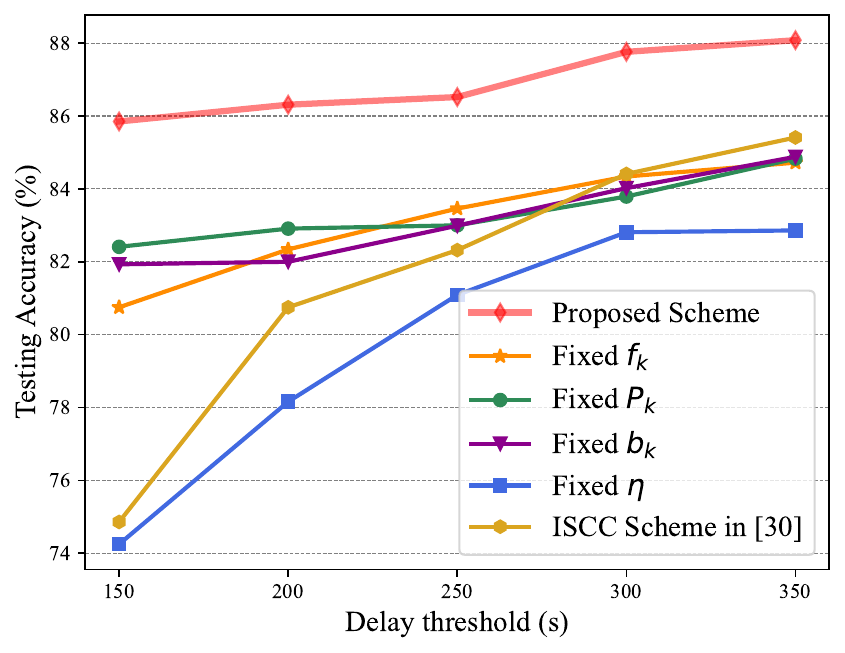}
    }
    \subfigure[Testing accuracy v.s. energy threshold]
    {
    \label{Fig.acce}
    \includegraphics[width=0.41\textwidth]{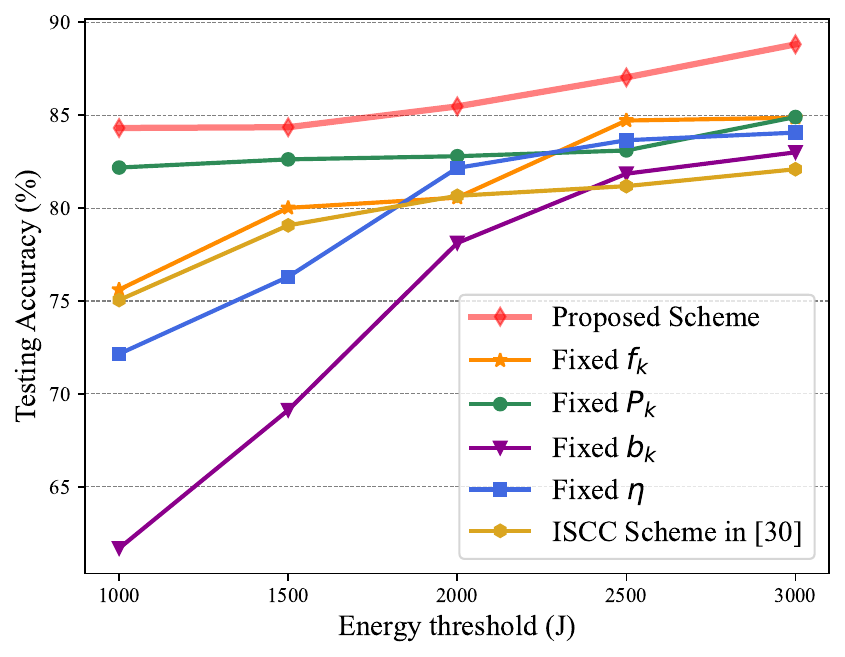}
    }
    \caption{Effects of network resources on performance.}%
    \label{Fig.acc}
    \end{center}
\end{figure*}

Fig. \ref{Fig.acct} shows the testing accuracy w.r.t. different delay thresholds. It is observed that, given a per-round latency threshold requirement, our proposed scheme outperforms other benchmark schemes and achieves the highest testing accuracy when the requirement is met. \textcolor{black}{Particularly, compared with \cite{liu2022toward}, a distinctive strength of the proposed scheme lies in consideration of sensing distortion and transmission method to obtain more sensed samples.} It is also observed that the performance tends to be better as the per-round training delay becomes larger. This is because more samples or higher sensing and AirComp SNRs can be sensed with a looser constraint. However, when the delay is large enough, the performance tends to converge. The reason is that, in such a case, energy consumption acts as a main system bottleneck.

Fig. \ref{Fig.acce} shows the testing accuracy w.r.t. different energy consumption thresholds. It is observed that our proposed scheme outperforms other benchmark schemes, which validates efficient learning. Moreover, the performance also tends to be better, with a larger energy consumption threshold. Similar to the previous case, the performance also tends to converge because the per-round training delay is the main system bottleneck.

\subsection{Effects of Device Numbers}
\textcolor{black}{
\begin{figure}
    \centering
    \includegraphics[width=0.9\linewidth]{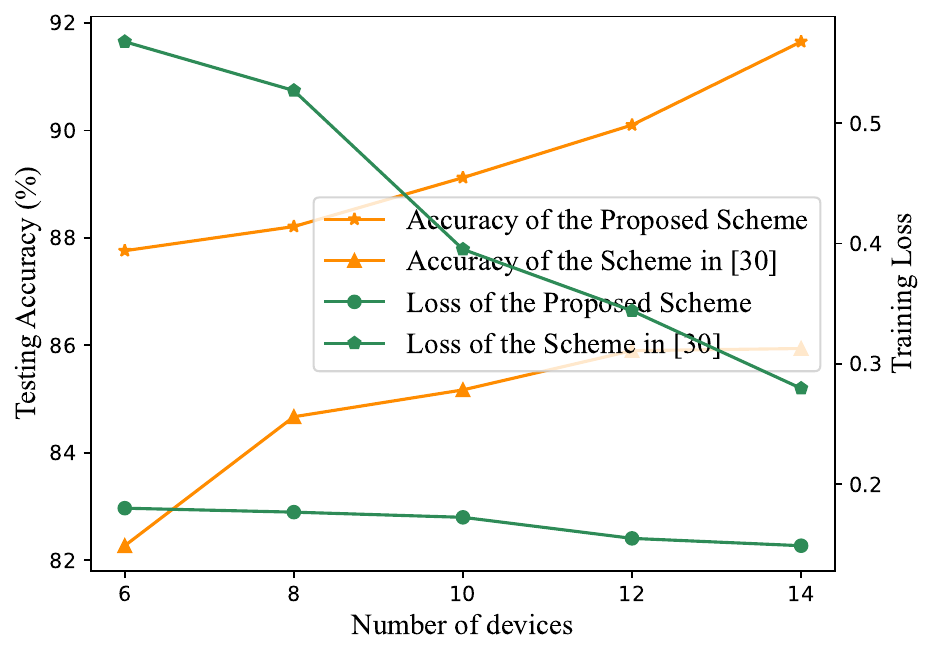}
    \caption{Effects of device numbers.}
    \label{fig:dev_num}
\end{figure}
}
\textcolor{black}{Fig. \ref{fig:dev_num} shows the testing accuracy and training loss w.r.t. the number of devices for both the proposed scheme and the scheme in \cite{liu2022toward}. As the number of devices increases, the testing accuracy improves. This is because a larger number of devices provides more sensing samples, enhancing the model's generalization capability and consequently boosting performance. Additionally, the proposed algorithm consistently outperforms the method in \cite{liu2022toward} as the number of devices increases. }

\section{Conclusion}
This paper investigated a new ISCC-based over-the-air FEEL framework, where the sensing, computation, and communication on devices are tightly coupled under the goal of enhancing learning performance. The detailed convergence analysis was first conducted, which is shown to be dependent on sensing noise, training batch size, and AirComp distortion. Particularly, we made the first attempt to characterize the effect of sensing noise on the FEEL convergence performance. Based on the theoretical convergence analysis, an ISCC scheme was proposed for enhancing the learning performance, that maximizes the loss function degradation in each round via joint batch size control and network resource allocation. The convergence behavior and the performance gain of the proposed ISCC scheme were verified via extensive simulation experiments. 
\textcolor{black}{This work opens several research directions for future investigation. These include further enhancing learning performance through strategies such as device scheduling based on integrated specific threshold or gradient importance and implementing asynchronous updating mechanisms.}

\vspace{-15pt}

\appendices
\section{Proof of Lemma \ref{Lem:Per-roundLoss}} \label{_proof_of_lemma_eachround}
According to  Assumption \ref{Assump:Smooth}, we have
\begin{equation}
    \begin{aligned}
        \mathcal{F}\left({\bf w}^{(t+1)}\right)-\mathcal{F}\left({\bf w}^{(t)}\right) 
        &\leq \left( {\bf w}^{(t+1)} - {\bf w}^{(t)} \right)^T {\bf g}^{(t)} \\
        &+ \dfrac{L}{2} \left\| {\bf w}^{(t+1)} - {\bf w}^{(t)} \right\|^2,
    \end{aligned}
\end{equation}
which, by substituting the global model update in \eqref{Eq:GlobalUpdate}, is derived as
\begin{equation}
\mathcal{F}\left({\bf w}^{(t+1)}\right)-\mathcal{F}\left({\bf w}^{(t)}\right) \leq -  \left( \alpha^{(t)} \tilde{\bf g}^{(t)}\right)^T {{\bf g}^{(t)}} +  \frac{L{\alpha^{(t)}}^2}{2} \left\|  \tilde{\bf g}^{(t)} \right\|^2.
\end{equation}
Taking expectations for both sides of the above inequality, it follows that 
\begin{equation}\label{equ:Exp}
    \begin{aligned}
        \mathbb{E} \left[\mathcal{F}\left({\bf w}^{(t+1)}\right)-\mathcal{F}\left({\bf w}^{(t)}\right) \right] 
        &\leq  - \alpha^{(t)}  \|{\bf g}^{(t)}  \|^2 \\
        &+ \frac{L{\alpha^{(t)}}^2}{2} \mathbb{E} \left[ \left\|  \tilde{\bf g}^{(t)} \right\|^2 \right],
    \end{aligned}
\end{equation}
since $\mathbb{E}( \tilde{\bf g}^{(t)} ) = {\bf g}^{(t)}$ according to Lemma \ref{Lma:GlobalGradientVector}. On the other hand, it holds that
\begin{equation}\label{Eq:GlobalGradientNorm}
\begin{aligned}
 & \mathbb{E} \left[ \left\|  \tilde{\bf g}^{(t)} \right\|^2 \right]  =  \mathbb{E} \left[ \left\|  \left( \tilde{\bf g}^{(t)} - {\bf g}^{(t)} \right) + {\bf g}^{(t)} \right\|^2 \right] \\
& = \mathbb{E} \left[ \left\|   \tilde{\bf g}^{(t)} - {\bf g}^{(t)} \right\|^2 \right] + 2 \mathbb{E} \left[ \left(   \tilde{\bf g}^{(t)} - {\bf g}^{(t)} \right)^T {\bf g}^{(t)} \right]  +  \left\|{\bf g}^{(t)} \right\|^2\\
& = \mathbb{E} \left[ \left\|   \tilde{\bf g}^{(t)} - {\bf g}^{(t)} \right\|^2 \right] +  \left\|{\bf g}^{(t)} \right\|^2,
\end{aligned}
\end{equation}
where the third equality holds because  $\mathbb{E}( \tilde{\bf g}^{(t)} ) = {\bf g}^{(t)}$. Then, by substituting the inequality of bounded variance in Lemma \ref{Lma:GlobalGradientVector} into \eqref{Eq:GlobalGradientNorm}, the following inequality is derived:
\begin{equation}
 \mathbb{E} \left[ \left\|  \tilde{\bf g}^{(t)} \right\|^2 \right] \leq \sum\limits_{k=1}^K \dfrac{b_k^{(t)}}{{b^{(t)}}^2} \left[ \sigma^2+ A^2(\delta_{k,c}^2+ \dfrac{ \delta_s^2}{P_{k,s}^{(t)}})\right] + \dfrac{\delta_u^2}{\eta^{(t)}} + \left\|{\bf g}^{(t)} \right\|^2.
\end{equation}
By substituting the above inequality into \eqref{equ:Exp}, it is further derived that 
\begin{equation}
\begin{aligned}
& \mathbb{E} \left[\mathcal{F}\left({\bf w}^{(t+1)}\right)-\mathcal{F}\left({\bf w}^{(t)}\right) \right] \leq  - \alpha^{(t)}  \|{\bf g}^{(t)}  \|^2 \\
&+ \dfrac{L{\alpha^{(t)}}^2}{2} \bigg\{ \dfrac{\delta_u^2}{\eta^{(t)}} + \left\|{\bf g}^{(t)} \right\|^2 +  \sum\limits_{k=1}^K \dfrac{b_k^{(t)}}{{b^{(t)}}^2} \bigg[ \sigma^2+ A^2(\delta_{k,c}^2+ \dfrac{ \delta_s^2}{P_{k,s}^{(t)}})\bigg] \bigg\}.
\end{aligned}
\end{equation}
By setting the learning rate as $\alpha^{(t)} = \frac{1}{ \sqrt{T} L}$, \eqref{Eq:Per-RoundLossReduction} follows. The lemma is thus proved.

\vspace{-15pt}
\section{Proof of Theorem \ref{theorem:converge}} \label{_proof_of_theorem:converge}
By rearranging \eqref{Eq:Per-RoundLossReduction}, we obtain 
\begin{equation} \label{equ:lemma_Per-roundLoss_re}
    \begin{aligned}
        &(\frac{1}{ \sqrt{T}L} - \frac{1}{2TL}) \|{\bf g}^{(t)}  \|^2
        \leq \mathbb{E} \left[\mathcal{F}\left({\bf w}^{(t)}\right)-\mathcal{F}\left({\bf w} ^{(t+1)}\right) \right] \\
        &+ \frac{1}{2TL} \bigg\{ \dfrac{\delta_u^2}{\eta^{(t)}} +  \sum\limits_{k=1}^K \dfrac{b_k^{(t)}}{{b^{(t)}}^2} \bigg[ \sigma^2+ A^2(\delta_{k,c}^2+ \dfrac{ \delta_s^2}{P_{k,s}^{(t)}})\bigg] \bigg\}.
    \end{aligned}
\end{equation}
Since $\sqrt{T} \leq T,\; T=1,2,\cdots$, the left side of \eqref{equ:lemma_Per-roundLoss_re} is scaled as 
$$\left(\frac{1}{\sqrt{T}L}-\frac{1}{2\sqrt{T}L} \right) \|{\bf g}^{(t)}  \|^2 = \frac{1}{2\sqrt{T}L} \|{\bf g}^{(t)}  \|^2.$$
Then, by multiplying $2\sqrt{T}L$ on both sides, we obtain 
\begin{equation}
    \begin{aligned}
       &\|{\bf g}^{(t)}  \|^2 \leq 2\sqrt{T}L \left[\mathcal{F}\left({\bf w}^{(t)}\right)-\mathcal{F}\left({\bf w}^{(t+1)}\right) \right] \\
       &+ \frac{1}{\sqrt{T}}  \bigg\{ \dfrac{\delta_u^2}{\eta^{(t)}} +  \sum\limits_{k=1}^K \dfrac{b_k^{(t)}}{{b^{(t)}}^2} \bigg[ \sigma^2+ A^2(\delta_{k,c}^2+ \dfrac{ \delta_s^2}{P_{k,s}^{(t)}})\bigg] \bigg\}.
    \end{aligned}
\end{equation}
Next, averaging both sides across all communication rounds $t=0, 1, \cdots, T-1$ obtains
\begin{equation}
    \begin{aligned}
        &\frac{1}{T}\sum_{t=0}^{T-1} \|{\bf g}^{(t)}  \|^2 
        \leq \dfrac{2L}{\sqrt{T}} \left[\mathcal{F}\left( {\bf w}^{(0)} \right)-\mathcal{F}\left({\bf w}^{(T)}\right) \right] \\
        &+ \frac{1}{\sqrt{T} \cdot T} \sum_{t=0}^{T-1}  \bigg\{ \dfrac{\delta_u^2}{\eta^{(t)}} +  \sum\limits_{k=1}^K \dfrac{b_k^{(t)}}{{b^{(t)}}^2} \bigg[ \sigma^2+ A^2(\delta_{k,c}^2+ \dfrac{ \delta_s^2}{P_{k,s}^{(t)}})\bigg] \bigg\}\\
        & \leq \dfrac{2L}{\sqrt{T}} \left[\mathcal{F}\left( {\bf w}^{(0)} \right)-\mathcal{F}^* \right] \\
        &+ \frac{1}{\sqrt{T} \cdot T} \sum_{t=0}^{T-1}  \bigg\{ \dfrac{\delta_u^2}{\eta^{(t)}}+  \sum\limits_{k=1}^K \dfrac{b_k^{(t)}}{{b^{(t)}}^2} \bigg[ \sigma^2+ A^2(\delta_{k,c}^2+ \dfrac{ \delta_s^2}{P_{k,s}^{(t)}})\bigg] \bigg\}.\\
    \end{aligned}
\end{equation}
As the number of rounds goes to infinity, we have \eqref{theorem:con}. This completes the proof.


\vspace{-15pt}
\section{Proof of Lemma \ref{Lem:P3} }\label{Apdx:LemP3}

Notice that problem $\mathcal{P}_3$ is convex and satisfies the Slater's condition, and thus strong duality holds between the primal problem and its dual problem. Therefore, we apply the KKT conditions to optimally solve problem  $\mathcal{P}_3$. 
Then the Lagrangian of problem $\mathcal{P}_3$ is
\begin{equation*}
\begin{aligned}
\mathcal{L}_3 =& \sum\limits_{k=1}^K \dfrac{\alpha_k^2}{ b_k } \bigg[ \sigma^2+ A^2(\delta_{k,c}^2+ \dfrac{ \delta_s^2 }{ P_{k,s} })\bigg] + \mu \left( \sum\limits_{k=1}^K \alpha_k - 1 \right)\\
+& \sum\limits_{k=1}^K \gamma_k \left( b_k \tau_s + \dfrac{b_k C}{f_k} + \left\lceil  \dfrac{N}{M} \right\rceil \tau_u - \mathcal{T} \right)\\
+& \sum\limits_{k=1}^K \lambda_k \left(  P_{k,s} b_k \tau_s + \Omega_k  b_k C f_k^2 + \dfrac{\eta \alpha_k^2 N \tau_u }{H_k}  - E_k \right),
\end{aligned}
\end{equation*}
where $\mu$, $\{\gamma_k \geq 0\}$, and $\{\lambda_k \geq 0\}$ are the dual variables associated with the constraints in \eqref{equ:P_2alpha}, \eqref{equ:P_2time}, and \eqref{equ:P_2energy}, respectively. Based on the KKT conditions with any given $\mu$, $\{\gamma_k \geq 0\}$, and $\{\lambda_k \geq 0\}$, if follows that 
\begin{align}
  &\frac{ \partial \mathcal{L}_3  }{ \partial b_k } = - \dfrac{\alpha_k^2}{ b_k^2 } \bigg[ \sigma^2+ A^2(\delta_{k,c}^2+ \dfrac{ \delta_s^2 }{ P_{k,s} })\bigg] +  \gamma_k (\tau_s + \dfrac{C}{f_k} ) \notag\\
  &~~~~~~~~~+\lambda_k (   P_{k,s} \tau_s + \Omega_k  C f_k^2 )=0, \forall k,\label{App3:KKT1}  \\
  &\frac{ \partial \mathcal{L}_3  }{ \partial \alpha_k } \!=\!\dfrac{2\alpha_k}{ b_k }\! \bigg[ \!\sigma^2\!+\! A^2(\delta_{k,c}^2\!+ \!\dfrac{ \delta_s^2 }{ P_{k,s} }\!)\!\bigg]\! +\! \mu \!+ \!\dfrac{ 2\lambda_k \alpha_k \eta N \tau_u}{H_k}\! = \!0, \forall k, \label{App3:KKT2}  
\end{align}
where \eqref{App3:KKT1} and \eqref{App3:KKT2} are the first-order derivatives of $\mathcal{L}_3  $ w.r.t. $\alpha_k$ and $b_k$, respectively. 
By combining \eqref{App3:KKT1} and \eqref{App3:KKT2} and with some manipulations, we have 
\begin{align}
\alpha_k &=  \dfrac{ (- \mu - 2\sqrt{  J_k M_k    })H_k }{  2\lambda_k \eta  N \tau_u }, \; \forall k,\label{Eq:ref1:alpha}\\
b_k &= \dfrac{H_k}{ 2\lambda_k \eta  N \tau_u } \bigg( -\mu \sqrt{ \dfrac{ M_k }{ J_k } }  - 2M_k  \bigg), \; \forall k,\label{Eq:ref1:b}
\end{align}
with $J_k = \gamma_k(\tau_s + C/ f_k ) +\lambda_k (   P_{k,s} \tau_s + \Omega_k  C f_k^2 )$ and $M_k =  \sigma^2+ A^2(\delta_{k,c}^2+  \delta_s^2 / P_{k,s} )$. Thus it remains to ﬁnd the optimal dual variables $\mu$, $\{\gamma_k \geq 0\}$, and $\{\lambda_k \geq 0\}$. 
By leveraging the primal-dual method, we update both the primal variables and the dual variables to approximately satisfy the optimality conditions. As a result, we can obtain the optimal solution for both the primal and dual variables.
At each iteration $i$, dual variables are updated as 
\begin{align}
\mu^{(i+1)}&=  \mu^{(i)} + \delta_{\mu} \frac{\partial \mathcal{L}_3}{\partial \mu^{(i)}}, \\
 \gamma_k^{(i+1)}&= \max \left\{ \gamma_k^{(i)} +  \delta_{\gamma_k}\frac{\partial \mathcal{L}_3}{\partial \gamma_k^{(i)}},\; 0 \right\}, \forall k,\\
 \lambda_k^{(i+1)}&= \max \left\{ \lambda_k^{(i)} +  \delta_{\lambda_k}\frac{\partial \mathcal{L}_3}{\partial \lambda_k^{(i)}},\; 0 \right\}, \forall k,
\end{align}
where $\delta_{\mu}$, $\{\delta_{\gamma_k}\}$, and $\{ \delta_{\lambda_k} \}$ are the step sizes.
It gradually increases the value of the dual variable until convergence, and then obtains the primary optimal solution through optimal dual variables, as shown in Lemma \ref{Lem:P3}. This lemma is thus proved.

\vspace{-15pt}
\section{Proof of Lemma \ref{Lem:P4}}\label{Apdx:LemP4}

As problem $\mathcal{P}4$ is convex and satisfies the Slater's condition, the strong duality holds between the primal problem and its dual problem. The KKT conditions are thus applied to solve problem $\mathcal{P}_4$, where the Lagrangian function is 
\begin{equation}
\begin{aligned}
& \mathcal{L}_4 = \dfrac{\delta_u^2}{\eta} +  \sum\limits_{k=1}^K \dfrac{\alpha_k^2}{ b_k } \bigg[ \sigma^2+ A^2(\delta_{k,c}^2+ \dfrac{ \delta_s^2 }{ P_{k,s} })\bigg] \\
& + \sum\limits_{k=1}^K \phi_k \left(  b_k \tau_s + \dfrac{b_k C}{f_k} + \left\lceil  \dfrac{N}{M} \right\rceil \tau_u - \mathcal{T} \right) \\
& + \sum\limits_{k=1}^K \psi_k \left(  P_{k,s} b_k \tau_s + \Omega_k  b_k C f_k^2 +  \dfrac{\eta \alpha_k^2 N \tau_u}{H_k}   - E_k \right),
\end{aligned}
\end{equation}
with $\{\phi_k \geq 0\}$ and $\{\psi_k \geq 0\}$ being multipliers. 
Hence, it holds that 
\begin{align}
&\dfrac{\partial \mathcal{L}_4 }{\partial P_{k,s} } = -\dfrac{ \alpha_k^2 A^2 \delta_s^2 }{ b_k P_{k,s}^2 } +  \psi_k b_k \tau_s = 0,\; \forall k, \label{App4:KKT1} \\
&\dfrac{\partial \mathcal{L}_4 }{\partial f_k } = - \dfrac{ \phi_kb_k C }{ f_k^2 } +  2 \psi_k \Omega_k  b_k C f_k = 0, \forall k, \label{App4:KKT2} \\
&\dfrac{\partial \mathcal{L}_4 }{\partial \eta} = -\dfrac{\delta_u^2}{\eta^2} + \sum\limits_{k=1}^K  \dfrac{\psi_k \alpha_k^2 N \tau_u}{H_k} = 0, \label{App4:KKT5} \\
& \phi_k \left(  b_k \tau_s + \dfrac{b_k C}{f_k} + \left\lceil  \dfrac{N}{M} \right\rceil \tau_u - \mathcal{T} \right)=0, \forall k, \label{App4:KKT3} \\
&\psi_k\! \left( \! P_{k,s} b_k \tau_s \!+\! \Omega_k  b_k C f_k^2 \!+\!  \dfrac{\eta \alpha_k^2 N \tau_u}{H_k}   - E_k \right)=0,\! \forall k, \label{App4:KKT4}
\end{align}
in which \eqref{App4:KKT1}, \eqref{App4:KKT2}, and \eqref{App4:KKT5} are the first-order derivative conditions of $\mathcal{L}_4$ w.r.t. $ P_{k,s}$, $f_k$, and $\eta$, respectively, \eqref{App4:KKT3} and \eqref{App4:KKT4} denote the complementary slackness conditions. 

By combining \eqref{App4:KKT1} and \eqref{App4:KKT3}, it holds that
\begin{align}
    &P_{k,s} =\dfrac{ \alpha_k A \delta_s}{ b_k \sqrt{ \psi_k\tau_s} } = \dfrac{ A \delta_s}{ b \sqrt{ \psi_k \tau_s} }, ~\forall k,
\end{align}
with second equation holding with $\alpha_k = b_k/b$. 
Furthermore, from \eqref{App4:KKT5}, we have 
\begin{align}
  \eta =\sqrt{ \delta_u^2 \bigg( \sum\limits_{k=1}^K\frac{\psi_k\alpha_k^2 N \tau_u  }{  H_k  } \bigg)^{-1} },
\end{align}
Since $\psi_k \neq 0, \; \forall k$, it holds $\phi_k \neq 0,\; \forall k$, and combining \eqref{App4:KKT3}, it follows that
\begin{equation}
 b_k \tau_s + \dfrac{b_k C}{f_k} + \left\lceil  \dfrac{N}{M} \right\rceil \tau_u = \mathcal{T},
 \label{App4:Tau}
\end{equation}
from which, we accordingly obtain
\begin{align}
f_k = \dfrac{ b_k C}{ {\mathcal{T}-\left\lceil  \dfrac{N}{M} \right\rceil \tau_u-b_k \tau_s } }, \; \forall k.
\end{align}

Then, it remains to ﬁnd the optimal multipliers $\{\phi_k \geq 0\}$ and $\{\psi_k \geq 0\}$.
We apply the primal-dual method to iteratively update both the primal variables and dual variables.
At each iteration $i$, dual variables are updated as 
\begin{align}
    \phi_k^{(i+1)}= \max \left\{ \phi_k^{(i)} + \delta_{\phi_k} \frac{\partial \mathcal{L}_3}{\partial \phi_k^{(i)}},0 \right\},  \forall k,\\
    \psi_k^{(i+1)}= \max \left\{ \psi_k^{(i)} +  \delta_{\psi_k} \frac{\partial \mathcal{L}_3}{\partial \psi_k^{(i)}},\; 0 \right\}, \forall k,
\end{align}
where $\{\delta_{\phi_k}\}$ and $\{\delta_{\psi_k}\}$ are step sizes.
By increasing the value of the dual variable until convergence, it obtains the primary solution through optimal dual solution, as shown in Lemma \ref{Lem:P4}. Hence, this lemma is completely proved.

\vspace{-15pt}
\bibliography{reference}

\end{document}